\documentclass[twoside,12pt]{article}

\usepackage[margin=1in]{geometry}
\usepackage{amsmath,amssymb,amsthm}
\usepackage{enumitem}
\usepackage{hyperref}
\usepackage{microtype}
\usepackage{booktabs}

\newtheorem{theorem}{Theorem}
\newtheorem{lemma}[theorem]{Lemma}

\newtheorem{corollary}[theorem]{Corollary}
\newtheorem{definition}{Definition}

\newcommand{\E}{\mathbb{E}}
\newcommand{\Prb}{\mathbb{P}}

\newcommand{\Pois}{\operatorname{Poisson}}
\newcommand{\Bin}{\operatorname{Binomial}}

\title{Self-Referential $K$-SAT and the Finite Analogue of G\"{o}del’s Incompleteness Theorem}

\author{
  Wen Fang$^1$,\quad Xianxian Li$^2$,\quad Jun Liu$^3$,\quad Jie Luo$^1$,\quad Yongxin Tong$^1$,\quad Ke Xu$^{1*}$
}
\date{}

\begin{document}
\maketitle
\vspace{-4pt}
\begin{center}
\footnotesize
$^1$State Key Lab of Complex and Critical Software Environment, Beihang University, Beijing, 100083, China\\
$^2$School of Computer Science and Engineering, Guangxi Normal University, Guilin, 541004, China\\
$^3$School of Mathematics, Taiyuan University of Technology, Taiyuan, 030600, China\\

\end{center}

\begin{abstract}
Self-reference and solution independence are core structural properties underlying hard combinatorial instances. This paper investigates whether Boolean $K$-SAT problems can simultaneously manifest both structural attributes, thereby establishing a precise, finite combinatorial analogue of G\"{o}del's incompleteness theorems. For standard random $K$-SAT with a constant clause size, we demonstrate that strong correlations among highly overlapping assignments inevitably disrupt solution independence. To resolve this structural constraint, we introduce a novel random ensemble wherein the clause width scales logarithmically with the number of variables ($K = O(\log N)$). In this regime, the total satisfying assignment count converges to a standard Poisson distribution, enabling unsatisfiable and uniquely satisfiable formulas to coexist with positive limiting probabilities at the critical scale. By executing a single-clause substitution conditioned on the unique solution, we construct structurally irreducible SAT/UNSAT pairs that are indistinguishable via local evaluation. Mirroring G\"{o}del's unprovable sentence, this construction deploys the formula's own unique solution to invert global satisfiability.

To elucidate the computational hardness induced by this local-global asymmetry, we analyze the structural limits of algorithmic compression. 
Utilizing algorithmic information theory and Shannon entropy channels, we prove that any deterministic deductive pipeline restricted to a sublinear window suffers from an inescapable informational blind spot, forcing a strict descriptive lower bound on the algorithm ($K(\mathcal{A}) \geq \Omega(N^{1-\delta})$). 
This localized information deficit acts as an unconditional barrier for formal reasoning, forcing any valid Resolution refutation of the unsatisfiable instance to utilize exceptionally wide clauses ($w(\pi) \geq \Omega(N^{1-\delta})$), which inevitably triggers an exponential explosion in proof-tree size ($S(\phi) \geq \exp(\Omega(N^{1-2\delta}))$). 
Pushing this structural isolation parameter to its theoretical limit ($\delta \rightarrow 0^+$) yields a smooth mathematical convergence with the worst-case $2^N$ search threshold. 
This alignment demonstrates that runtime boundaries are strictly dictated by static information conservation laws, reframing the Strong Exponential Time Hypothesis (SETH) as a direct projection of G\"{o}del incompleteness onto finite physical computing systems.

This work diagnoses the decades-long stagnation in complexity theory. We reveal that the fundamental barrier is a binary logical necessity, not an algorithmic failure. 
By transitioning from Turing's abstract class separation to a Gödelian paradigm of instance indistinguishability, we introduce a multi-dimensional comparative framework that systematically contrasts these two historical lineages across distinct analytical perspectives.
Finally, we demonstrate the physical invariance of the self-referential hardness across changing computing paradigms: it precludes any quantum algorithmic shortcut due to the absolute necessity of global semantic analysis, and it delineates a fundamental scaling bottleneck for modern machine learning architectures that operate purely on lossy, local statistical compression.
\end{abstract}

\noindent\textbf{Keywords:} $K$-SAT; solution independence; self-reference; G\"{o}del incompleteness; structural irreducibility; logical necessity; descriptive complexity; Poisson distribution; SETH.

\footnotetext[1]{E-mail: kexu@buaa.edu.cn}

\setcounter{tocdepth}{2} 
\tableofcontents

\newpage

\section{Introduction}

Self-reference gives rise to important impossibility results across logic, computability, and computational complexity, with important philosophical significance. In classical mathematical logic, self-reference enables a formal system to encode structurally reflexive statements about its own properties. 
Previous work by Xu and Zhou~\cite{xu2025} and subsequent studies on self-referential instances~\cite{li2026, zhou-ds2026, zhou2026} explored a similar direction. Their goal is to construct combinatorial instances where a tiny local symmetric transformation swaps the formula between satisfiable and unsatisfiable states. When localized subinstances fail to differentiate between these two opposing global states, the entire combinatorial object exhibits a strict structural irreducibility, forcing any valid solver to evaluate the global instance in its entirety to determine its satisfiability status.

The inspiration for this line of inquiry traces directly back to G\"{o}del’s incompleteness theorem. G\"{o}del’s construction famously arithmetizes syntax to build a meta-mathematical sentence asserting its own unprovability within a defined formal system~\cite{godel1931}. As emphasized by Budiansky~\cite{budiansky2021}, this theorem historically destroyed Hilbert's ambition to secure all of mathematics through a complete, enclosed formal proof apparatus. Its enduring lesson, however, is creative rather than purely negative: mathematics becomes inexhaustive  because no fixed formal system can contain the entirety of arithmetic truth. While the present paper does not operate in the infinite domain of G\"{o}del's unprovability, it establishes its precise finite combinatorial counterpart. We demonstrate that a Boolean formula can be structurally manipulated using the information of its own unique solution. The resulting pair of formulas exhibits indistinguishability within any sublinear evaluation window, mathematically proving that such formulas are structurally irreducible via local evaluation.

Recent theoretical developments~\cite{zhou2026} argue that the definitive structural prerequisite for such constructions is solution independence, a landscape where candidate solutions behave as mutually independent potential witnesses. This property explains why Model RB~\cite{xu2000, xu2025}---which achieves solution independence by increasing its domain size---and classical random graph instances for the clique~\cite{li2026} and dominating set~\cite{zhou-ds2026} problems, naturally support self-referential constructions. By contrast, many canonical NP-complete problems, such as the vertex cover problem, exhibit positive structural correlations among candidate solutions. This structural correlation provides exploitable semantic clues, explaining why localized pruning and structural shortcuts can evade brute-force search in those settings.

The Boolean Satisfiability (SAT) problem occupies a unique position in computer science, renowned not only as the inaugural NP-complete problem~\cite{cook1971}, but more fundamentally as the precise question posed in G\"{o}del’s historic 1956 letter to von Neumann. This historical connection underscores the foundational significance of the problem. This paper focuses on the standard $K$-SAT formulation, which is NP-complete for clause widths $K \ge 3$. Here, we utilize $K$ to denote the maximum clause width, distinguishing between standard fixed-width random $K$-SAT where $K$ is a constant~\cite{franco1985, ding2015}, and logarithmic-width $K$-SAT where $K = O(\log N)$ as a function of the number of Boolean variables $N$~\cite{frieze2005, liu2012}. For fixed-width $K$-SAT, state-of-the-art algorithms successfully exploit structural correlations to evade brute-force search~\cite{paturi2005}. In contrast, for the general, unrestricted $K$-SAT problem, no such algorithmic shortcuts are known, and it is widely conjectured that none exist---a structural barrier formalized by the Strong Exponential-Time Hypothesis (\textsf{SETH})~\cite{calabro2009}.

The central question is:
\[
  \emph{Can $K$-SAT be both solution-independent and self-referential?}
\]

Our contributions are summarized as follows:

\begin{itemize}
    \item \textbf{Defining Solution Independence:} We define solution independence for Boolean CNF formulas using factorial moments of the satisfying assignment count. This characterizes the standard threshold condition 
  and establishes a stronger Poisson asymptotic convergence.
    
    \item \textbf{Constant-Width Clauses:} We prove that standard random $K$-SAT formulas with fixed clause lengths cannot achieve solution independence at the critical scale. This happens because high-overlap assignment correlations exponentially inflate the normalized second moment.
    
    \item \textbf{Logarithmic-Width CNF Ensemble:} We introduce a random CNF ensemble with logarithmic clause width $K = O(\log N)$ at the subcube-covering threshold and show that its solution count converges asymptotically to a Poisson distribution. At this threshold, unsatisfiable and uniquely satisfiable formulas containing redundant clauses coexist with positive limiting probability, allowing a single-clause transformation to execute a SAT/UNSAT flip.
    
    \item \textbf{Structural Irreducibility Theorem:} We show that these SAT/UNSAT formula pairs cannot be separated via local evaluation, establishing this theorem as a formal, finite analogue of G\"{o}del incompleteness. Any sublinear subformula remains structurally identical regardless of global satisfiability, mathematically proving that a solver must evaluate the entire formula to determine the global truth.

    \item \textbf{A New Diagnosis for Complexity Barriers:} We diagnose the multi-decade stagnation in lower-bound theory, revealing the barrier as a \textbf{binary logical necessity} rather than an algorithmic failure. The conventional hunt for incremental bounds misses the point: complexity’s essence is the impossibility of local syntax dictating global truths. Under this paradigm, the dichotomy between ``weak'' and ``strong'' bounds completely disappear. Consequently, local deductive frameworks remain structurally blind, making any attempt to establish even modest superlinear bounds inherently intractable. By contrast, the G\"{o}delian self-referential construction deployed in this work provides an effective methodology to overcome these long-standing complexity barriers.
    
    \item \textbf{Descriptive Complexity Limits:} We use information theory to connect the structural irreducibility theorem to how hard the problem is to solve. We prove that any algorithm looking only at small windows must have a large program size, setting a lower limit on its description complexity of $K(\mathcal{A}) \geq \Omega(N^{1-\delta})$.
    
    \item \textbf{Exponential Proof Size Growth:} We show that this information limit creates a barrier for Resolution proof steps. Any complete proof that a formula has no solution must use very long clauses, which forces the total size of the proof tree to grow exponentially as $S(\phi) \geq \exp(\Omega(N^{1-2\delta}))$.
    
    \item \textbf{Link to SETH:} When we push our mathematical limits to the maximum ($\delta \to 0^+$), our proof size bound smoothly matches the worst-case time of $2^N$ steps. This connection shows that the core hardness described by the Strong Exponential Time Hypothesis (SETH) is a direct reflection of G\"{o}del's incompleteness onto finite computation.    

    \item \textbf{Paradigm Comparison Framework:} We introduce a multi-dimensional comparative framework systematically contrasting classical class separation with our instance-indistinguishability formulation, tracing their historical lineages back to Turing’s abstract set bifurcations and Gödel’s self-referential constructions to expose the distinct analytical perspectives governing computational hardness.

    \item \textbf{Hardness Across Different Systems:} We show that the self-referential hardness stays the same across different computing frameworks. For quantum computers, it rules out algorithmic shortcuts because global information is still required. For machine learning, it explains why models that only learn local patterns cannot fully solve these self-referential instances.
\end{itemize}

\section{Preliminaries}

\subsection{Boolean CNF}

Let $N$ be the number of Boolean variables, and let
\[
  V = \{x_1,\ldots,x_N\}.
\]
A $K$-clause is a disjunction of at most $K$ literals over $V$. A $K$-SAT
formula is a conjunction of such clauses. Let $M$ denote the number of clauses.

For a formula $F$, let
\[
  X(F) = \#\{s \in \{0,1\}^N : s \text{ satisfies } F\}
\]
be the number of satisfying assignments. When the formula is random, we write
$X$ for the corresponding random variable.

\subsection{Solution Independence}

For an assignment $\sigma$, define
\[
  I_\sigma(F) = \mathbf{1}\{\sigma \text{ satisfies } F\}.
\]
The weakest form of solution independence used here is the critical
second-moment condition:
\[
  \E[X] \to \lambda, \qquad 0 < \lambda < \infty,
\]
and
\[
  \frac{\E[X(X-1)]}{\E[X]^2} \to 1.
\]
This says that two distinct candidate assignments survive almost as if the
events $\{\sigma \text{ satisfies } F\}$ and $\{\tau \text{ satisfies } F\}$
were independent on average.

The stronger form is factorial-moment convergence:
\begin{equation} \label{eq:indep}
\mathbb{E}[(X)_r] \to \lambda^r \quad \text{for every fixed } r \geq 1 \implies X \Rightarrow \text{Poisson}(\lambda) 
\end{equation}
where
\[
  (X)_r = X(X-1)\cdots(X-r+1).
\]
In particular,
\[
  \Prb[X=0] \to e^{-\lambda},
  \qquad
  \Prb[X=1] \to \lambda e^{-\lambda}.
\]
The Poisson form is the cleanest way to express the coexistence of
unsatisfiable and uniquely satisfiable instances with positive limiting
probability.

\subsection{Self-Reference}

For CNF formulas, we use the following operational notion.

\begin{definition}[self-referential SAT/UNSAT flip]
\label{def:flip}
{
Fix a syntactic class $\mathcal{C}$ of CNF formulas, specified by a clause-width
bound and a clause count.
A CNF family in $\mathcal{C}$ admits a \emph{self-referential SAT/UNSAT flip}
if it contains pairs $(F,F')\in \mathcal{C}\times\mathcal{C}$ such that
\[
  \begin{aligned}
  F  &\text{ is uniquely satisfiable},\\
  F' &\text{ is unsatisfiable},
  \end{aligned}
\]
and $F'$ is obtained from $F$ by a \emph{legal one-step transformation},
defined as the replacement of a single clause of $F$ by a single clause of
$\mathcal{C}$. Equivalently, $F$ and $F'$ differ in exactly one clause and
both lie in $\mathcal{C}$.}
\end{definition}

In the logarithmic-width construction, $\mathcal{C}$ is the class of $K$-SAT
formulas with $M$ clauses where $K=\lceil(1+\varepsilon)\log_2 N\rceil$ and
$M=\Theta(N^{2+\varepsilon})$, and the legal transformation is a single
$K$-clause replacement.

\subsection{Random $K$-SAT Ensemble}

The standard random $K$-SAT ensemble used below is the following. Each clause
is sampled independently by choosing $K$ distinct variables uniformly and then
choosing one of the $2^K$ sign patterns uniformly. Repeated clauses are
allowed; this convention has no effect on the asymptotic calculations used
here.

For a fixed assignment, one random $K$-clause is falsified with probability
\[
  p = 2^{-K},
\]
and satisfied with probability
\[
  q = 1-p.
\]

\section{Native Constant-Width Clauses}

This section proves that the natural fixed-width random-clause mechanism does
not yield solution independence.

\begin{theorem}[native fixed-width obstruction]
Fix a constant $K \ge 2$. Let $F_N$ be a random $K$-SAT formula on $N$ Boolean
variables with $M=M(N)$ clauses chosen so that
\[
  \E[X] = 2^N(1-2^{-K})^M = \Theta(1).
\]
Then the native random-clause ensemble is not solution-independent. More
precisely,
\[
  \frac{\E[X(X-1)]}{\E[X]^2}
\]
does not converge to $1$; in fact the contribution from assignment pairs at
small positive Hamming distance is exponentially large.
\end{theorem}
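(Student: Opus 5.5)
We compute the second factorial moment directly by summing over ordered pairs of distinct assignments, grouped by Hamming distance. For assignments $\sigma,\tau$ at Hamming distance $d$, a single random $K$-clause must avoid falsifying either of them. By inclusion--exclusion, the probability it falsifies at least one is $2p - p_{\mathrm{both}}(d)$. Here $p_{\mathrm{both}}(d) = 2^{-K}\binom{N-d}{K}/\binom{N}{K}$, since the clause falsifies both exactly when all its $K$ variables lie in the $N-d$ coordinates where $\sigma,\tau$ agree and the signs are the (common) falsifying pattern. Writing $\alpha = d/N$ and using clause independence gives
\[
  \E[X(X-1)] = 2^N \sum_{d=1}^{N} \binom{N}{d}\Bigl(1 - 2p + p\,\tfrac{\binom{N-d}{K}}{\binom{N}{K}}\Bigr)^M .
\]
Dividing by $\E[X]^2 = 4^N q^{2M}$ turns the ratio into
\[
  2^{-N}\sum_{d}\binom{N}{d}\,\bigl(1+\eta_d\bigr)^M,
  \qquad
  \eta_d = \frac{p\,\binom{N-d}{K}/\binom{N}{K} - p^2}{q^2}.
\]

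The key step is to isolate one Hamming distance, or a window of them, whose single term already blows up. The hypothesis $\E[X]=\Theta(1)$ fixes $M = N\ln 2/(-\ln q) + O(1) = \Theta(N)$, so $M = \rho N$ with $\rho = \ln 2 / \ln(1/q)$ a constant. For small $\alpha>0$ we have $\binom{N-d}{K}/\binom{N}{K} \approx (1-\alpha)^K$. Taking logarithms and dividing by $N$, the $d$-th term is $\exp\bigl(N f(\alpha) + o(N)\bigr)$ with
\[
  f(\alpha) = H(\alpha) - \ln 2 + \rho \ln\bigl(1 + \eta(\alpha)\bigr),
  \qquad
  \eta(\alpha) = \frac{p(1-\alpha)^K - p^2}{q^2},
\]
where $H$ is the natural-log entropy. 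At $\alpha = 0$ we get $\eta(0) = p/q$, so $\rho\ln(1+p/q) = \rho\ln(1/q) = \ln 2$ and $f(0) = 0$; this is the diagonal contribution. As $\alpha \to 0^+$ the entropy term $H(\alpha) \sim \alpha\ln(1/\alpha)$ has infinite slope, while the energy term $\rho\ln(1+\eta(\alpha))$ has bounded derivative $-\rho K p/(q^2(1+p/q)) \cdot(1+o(1))$. Hence $f(\alpha) > 0$ for all sufficiently small fixed $\alpha > 0$.

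Choosing $d = \lfloor \alpha_0 N \rfloor$ for such an $\alpha_0$, that single nonnegative term contributes $e^{\Omega(N)}$ to the ratio. Since all terms are positive, $\E[X(X-1)]/\E[X]^2 \ge e^{cN}$ for some $c>0$. In particular the ratio does not tend to $1$, and the contribution from pairs at small positive Hamming distance is exponentially large, as claimed.

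The main obstacle is making the small-$\alpha$ estimate rigorous with explicit control of the error terms. We must replace $\binom{N-d}{K}/\binom{N}{K}$ by $(1-\alpha)^K(1+O(K^2/N))$, which is harmless since $K$ is constant. We must also use Stirling to write $\binom{N}{d} = e^{NH(\alpha) + O(\log N)}$ and handle the $O(1)$ slack in $M$, which only changes the term by a bounded factor. The cleanest route is to fix $\alpha_0$ by an explicit inequality, say $\alpha_0\ln(1/\alpha_0) > 2\rho K p\,\alpha_0/q$ together with $\alpha_0 < 1/2$. With that choice all lower-order corrections are absorbed into $o(N)$, so no delicate optimization over $\alpha$ is needed.
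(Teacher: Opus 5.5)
Your proposal is correct and follows the same route as the paper. You decompose $\E[X(X-1)]/\E[X]^2$ by Hamming distance and get the same rate function: your $1+\eta(\alpha)$ is the paper's one-clause correlation ratio $R_K(\alpha)$, and your $\rho$ is its $c_K$. You then compare the infinite slope of $H(\alpha)$ at $0$ with the bounded slope $-\rho pK/q$ of the energy term to show that a shell of small fixed radius contributes $e^{\Omega(N)}$. Your explicit choice of $\alpha_0$ and your use of the exact ratio $\binom{N-d}{K}/\binom{N}{K}$ make the error control more concrete than the paper's $o(1)$ bookkeeping, but the argument is the same.
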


\begin{proof}
Let
\[
  p = 2^{-K}, \qquad q = 1-p.
\]
Let $\sigma$ and $\tau$ be assignments at Hamming distance $h=\beta N$. A
random clause is falsified by both assignments only if every selected variable
lies in a coordinate where the two assignments agree, and the signs match
their common values. Thus
\[
  \Prb[\sigma,\tau \text{ both satisfy one clause}]
    = 1 - 2p + p(1-\beta)^K + o(1).
\]
{
The error term is uniform over $\beta\in[0,1]$: writing the agreement set as
$A\subseteq[N]$ with $|A|=(1-\beta)N$, the exact probability is
$\binom{|A|}{K}/\binom{N}{K}\cdot 2^{-K}$, which differs from $(1-\beta)^K p$
by $O(K^2/N)$ uniformly. The same uniform error applies to the marginals,
giving a single $o(1)$ valid for all $\beta\in[0,\beta_0]$ for any fixed
$\beta_0<1$.}

The one-clause correlation ratio is
\begin{equation} \label{eq:corre}
R_K(\beta) = \frac{1 - 2p + p(1 - \beta)^K}{q^2} 
\end{equation}
At the first-moment critical scale,
\[
  M = c_K N + O(1),
  \qquad
  c_K = \frac{\ln 2}{-\ln q}.
\]
The normalized contribution of ordered pairs at distance $\beta N$ has
exponential rate
\[
  \Phi_K(\beta)
    =
    H(\beta) - \ln 2 + c_K \ln R_K(\beta),
\]
where $H(\beta)$ is the binary entropy function with natural logarithms.

At $\beta=0$, the diagonal balance gives $\Phi_K(0)=0$. For small positive
$\beta$,
\[
  H(\beta) = \beta \ln(1/\beta) + O(\beta).
\]
From Eq.~\eqref{eq:corre},
\[
  \ln R_K(\beta)
    =
    -\ln q - \frac{pK}{q}\beta + O(\beta^2).
\]
Thus,
\[
  \Phi_K(\beta)
    =
    \beta \ln(1/\beta) - C_K\beta + O(\beta^2)
\]
for a constant $C_K$. For all sufficiently small fixed $\beta>0$, the term
$\beta\ln(1/\beta)$ dominates, and hence $\Phi_K(\beta)>0$.

Thus a Hamming shell of small positive radius contributes
\[
  \exp\{\Phi_K(\beta)N + o(N)\}
\]
to the normalized second moment. This is exponentially larger than the
independent value. Hence the second-moment independence condition fails.
\end{proof}

The proof isolates the obstruction. Nearby assignments disagree on only a
small fraction of variables, while a constant-width clause observes only
$K=O(1)$ positions. Most clauses therefore miss the disagreement and treat the
two assignments almost identically.

\begin{corollary}
Native flat fixed-width random clauses cannot reproduce the Model-RB-style
solution-independence mechanism.
\end{corollary}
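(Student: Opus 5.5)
The corollary will be derived directly from the preceding theorem by unpacking what the ``Model-RB-style solution-independence mechanism'' requires. I will make this precise as the weakest condition from the Preliminaries. The first moment must be tuned to a critical constant, $\E[X]\to\lambda\in(0,\infty)$, and the normalized second factorial moment must satisfy $\E[X(X-1)]/\E[X]^2\to 1$. The stronger Poisson form \eqref{eq:indep} implies this condition by taking $r=1,2$, so it suffices to refute the weak condition for every native fixed-width ensemble.

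The plan has three steps. First, fix any constant $K\ge 2$ and any clause-count sequence $M(N)$. If $\E[X]=2^N(1-2^{-K})^M$ does not stay bounded away from $0$ and $\infty$, then the first half of the mechanism already fails. Second, if $\E[X]=\Theta(1)$, the hypotheses of the native fixed-width obstruction theorem hold verbatim. Third, apply that theorem. I will write $\E[X(X-1)]/\E[X]^2$ as a sum over Hamming shells $h=\beta N$. Every term of this sum is nonnegative, so the whole sum is at least the contribution of a single shell at a small fixed $\beta>0$. By the theorem, that shell contributes $\exp\{\Phi_K(\beta)N+o(N)\}$ with $\Phi_K(\beta)>0$. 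The ratio therefore diverges to $\infty$ and in particular does not converge to $1$.

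The one point needing care is the uniformity of the error term when the single-shell lower bound is extracted. I will use the uniform $O(K^2/N)$ estimate on $[0,\beta_0]$ stated in the theorem's proof. Because $K$ is constant and the number of clauses is $M=\Theta(N)$, this gives $(1+O(1/N))^{M}=O(1)$. It also means that replacing $R_K(\beta)$ by its exact finite-$N$ value changes the exponent by only $o(N)$. The binomial coefficient $\binom{N}{\beta N}$ is handled by Stirling's formula, $\binom{N}{\beta N}=\exp\{H(\beta)N+O(\log N)\}$.

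Finally, I will interpret ``cannot reproduce'' as follows. For every choice of density, native flat fixed-width clauses violate either the critical first-moment condition or the second-moment independence condition. Hence they violate the factorial-moment Poisson convergence \eqref{eq:indep} as well. I do not expect a serious obstacle, since the work was done in the theorem. The main subtlety is making explicit that ``native'' fixes $K$ independent of $N$. Otherwise $pK=K2^{-K}$ could shrink and the sign of $\Phi_K$ could change, which is precisely the loophole the logarithmic-width ensemble exploits later.
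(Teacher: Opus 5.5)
Your proposal is correct and follows the paper's route: the corollary is an immediate consequence of the native fixed-width obstruction theorem. Your case split on $\E[X]$, the single-shell lower bound via nonnegativity of the shell terms, and the uniform $O(K^2/N)$ error control simply make explicit what the paper leaves implicit. One small correction to your closing aside: what breaks the obstruction when $K$ grows is not that $pK$ shrinks but that the linear coefficient $c_K pK/q\approx K\ln 2$ grows, so the small-$\beta$ window where $\beta\ln(1/\beta)$ dominates shrinks to $\beta\lesssim 2^{-K}=o(1/N)$ and contains no nonzero Hamming shell.
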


\section{Logarithmic-Width CNF}

\subsection{Poisson Solution Counting and Projection Defect Bounds}

We now allow the clause width to grow logarithmically:
\[
  K = O(\log N).
\]
This is still ordinary Boolean CNF, but the clause is wide enough to detect
high-overlap assignments at the scale required by the second moment.

Fix constants
\[
  \varepsilon > 0, \qquad \lambda > 0.
\]
Let
\[
  K = \left\lceil (1+\varepsilon)\log_2 N \right\rceil,
  \qquad
  p = 2^{-K},
  \qquad
  q = 1-p.
\]
Generate $M$ random $K$-clauses independently as in Section~2.4, where
\[
  M
    =
    \left\lfloor
      \frac{N\ln 2 - \ln \lambda}{-\ln(1-p)}
    \right\rfloor.
\]
Since
\[
  p = N^{-(1+\varepsilon)+o(1)},
\]
we have
\begin{equation}\label{eq:claus}
  M
    =
    (1+o(1))2^K(N\ln 2-\ln \lambda)
    =
    \Theta(N^{2+\varepsilon}).
\end{equation}
Every clause has width $O(\log N)$.

\paragraph{Projection-cluster estimate.}
We shall use a fixed-$r$ cluster estimate. It is a finite-dimensional version
of the standard Poisson-approximation principle for rare dependent events, in
the spirit of the Chen--Stein method of Arratia, Goldstein, and
Gordon~\cite{arratia1989} and random covering estimates such as
Godbole--Janson~\cite{godbole-janson1996}.

Let $\mathbf{s}=(s_1,\ldots,s_r)$ be an ordered $r$-tuple of distinct
assignments in $\{0,1\}^N$. For a uniformly random $K$-set
$J\subseteq [N]$, let
\[
  T_J(\mathbf{s})
  =
  \left|\{s_1|_J,\ldots,s_r|_J\}\right|
\]
be the number of distinct projections of the tuple onto $J$, and define
\[
  \Delta(\mathbf{s})
  =
  r-\E_J[T_J(\mathbf{s})].
\]
The following estimate is the technical point in the Poisson proof:
\[
  \E_{\mathbf{s}}
  \exp\{(N\ln 2+O(1))\Delta(\mathbf{s})\}
  =
  1+o(1),
\]
where $\mathbf{s}$ is uniformly distributed over ordered distinct $r$-tuples.
This differs from fixed-width clauses, as logarithmic-width 
clauses effectively preserve the inherent disagreement between distinct 
assignments, which constitutes the key structural reason for solution independence.
We give the proof in three elementary steps.

\smallskip
\noindent\textbf{Step 1: projection defects are supported on collision
clusters.}
For a partition $\pi$ of $[r]$, write $b(\pi)$ for the number of blocks and
$d(\pi)=r-b(\pi)$. Let $T_\pi(\mathbf{s})$ be the number of coordinates on
which at least one block of $\pi$ is nonconstant. Equivalently, outside these
$T_\pi$ coordinates, all assignments in the same block of $\pi$ agree. If the
projection partition induced by $J$ coarsens $\pi$, then $J$ must avoid these
$T_\pi$ coordinates. Hence,
\[
  \Prb_J[J\text{ coarsens }\pi]
  =
  \frac{\binom{N-T_\pi(\mathbf{s})}{K}}{\binom NK}.
\]
Since $r-T_J(\mathbf{s})$ is the rank of the projection partition, summing over
nontrivial collision partitions gives
\[
  \Delta(\mathbf{s})
  \le
  \sum_{\pi\ne \hat 0}
  d(\pi)
  \frac{\binom{N-T_\pi(\mathbf{s})}{K}}{\binom NK},
\]
where $\hat 0$ denotes the discrete partition. The number of partitions of
$[r]$ is a constant depending only on $r$.

\smallskip
\noindent\textbf{Step 2: a fixed cluster is summable.}
Fix a nontrivial partition $\pi$ and write $d=d(\pi)\ge 1$. For a uniformly
random ordered distinct $r$-tuple, the conditioning on distinctness changes
probabilities by a factor $1+o(1)$. At one coordinate, all assignments are
constant on each block of $\pi$ with probability $2^{-d}$. Thus,
\[
  \Prb[T_\pi=t]
  \le
  (1+o(1))
  \binom Nt
  (1-2^{-d})^t
  2^{-d(N-t)}.
\]
Let
\[
  \rho(t)=\frac{\binom{N-t}{K}}{\binom NK}.
\]
The contribution of this partition to the exponential moment is controlled by
\[
  \sum_{t\ge 1}
  \binom Nt
  (1-2^{-d})^t
  2^{-d(N-t)}
  \exp\{dN\ln 2\,\rho(t)+O(\rho(t))\}.
\]
The term $t=0$ is absent because it would force two assignments in a block of
$\pi$ to be identical, contradicting distinctness.

We split the sum into three ranges. If $1\le t=o(N/K)$, then
\[
  \rho(t)
  =
  1-\frac{Kt}{N}+O\!\left(\frac{K^2t^2}{N^2}\right),
\]
and the $t$th summand is at most
\[
  \binom Nt(2^d-1)^t2^{-dKt+o(Kt)}
  \le
  \left(C_rN2^{-K+o(K)}\right)^t
  =
  O_r(N^{-\varepsilon t}).
\]
Thus, the small-cluster range contributes $o(1)$.

{
If $t$ is between $AN/K$ and $(1-2^{-d})N-N^{2/3}$, with $A$ a sufficiently
large constant depending only on $r$, then writing the $t$-th summand as
$\exp_2\{\Psi_d(t)\}$ in base $2$, we have
\[
  \Psi_d(t)
  =
  \log_2\!\binom{N}{t}
  + t\log_2(1-2^{-d})
  - d(N-t)
  + dN\rho(t).
\]
The first two terms are bounded by $NH_2(t/N)$ (the binary entropy of $t/N$);
the third and fourth combine to $-dN(1-2^{-d}-\rho(t))-d(N-t-(1-2^{-d})N)$.
In the stated range, $t/N$ is bounded away from $1-2^{-d}$ by at least
$N^{-1/3}$, so $1-\rho(t)$ is bounded below by a constant fraction of
$1-2^{-d}$, and the term $-dN(1-2^{-d}-\rho(t))$ contributes $-\Omega(N)$.
The entropy contribution is $O(N)$ but with strictly smaller coefficient once
$A$ is chosen so that $H_2(t/N)<d(1-2^{-d}-\rho(t))$ on the interval
(this is the choice of $A$); hence $\Psi_d(t)\le -\Omega(N)$ uniformly. The
intermediate range therefore contributes $\exp\{-\Omega(N)\}=o(1)$.}

Finally, in the central range
\[
  t=(1-2^{-d})N+O(N^{2/3}),
\]
we have
\[
  \rho(t)
  =
  2^{-dK+o(K)}
  =
  p^d(1+o(1)),
\]
so
{
\[
  dN\ln2\,\rho(t)
  =
  d\ln2\cdot N\cdot p^d(1+o(1))
  =
  d\ln2\cdot N^{1-d(1+\varepsilon)+o(1)}
  =
  o(1)
\]

uniformly for every $d\ge 1$, since $1-d(1+\varepsilon)\le -\varepsilon<0$.}
The probability mass of this range is at most $1$, and its exponential factor
is $1+o(1)$. Hence, a fixed nontrivial partition contributes only $1+o(1)$ to
the exponential moment, with all non-central excess contributing $o(1)$.

\smallskip
\noindent\textbf{Step 3: summing over finitely many clusters.}
{
Because $r$ is fixed, the number of nontrivial partitions of $[r]$ is bounded
by the Bell number $B_r$, a constant independent of $N$. Write
\[
  S(\mathbf{s})
  :=
  \sum_{\pi\ne\hat 0}
    d(\pi)\,\rho_\pi(\mathbf{s}),
  \qquad
  \rho_\pi(\mathbf{s})
  :=
  \frac{\binom{N-T_\pi(\mathbf{s})}{K}}{\binom NK}.
\]
By Step~1, $0\le\Delta(\mathbf{s})\le S(\mathbf{s})$, and since the exponential
is monotone,
\[
  1
  \le
  \E_{\mathbf{s}}\exp\{C_N\Delta(\mathbf{s})\}
  \le
  \E_{\mathbf{s}}\exp\{C_N S(\mathbf{s})\},
  \qquad
  C_N:=N\ln 2+O(1).
\]
We expand the right-hand side as a product over partitions and apply the
$B_r$-fold H\"older inequality:
\[
  \E_{\mathbf{s}}\exp\{C_N S(\mathbf{s})\}
  =
  \E_{\mathbf{s}}\!\prod_{\pi\ne\hat 0}
    \exp\{C_N d(\pi)\rho_\pi(\mathbf{s})\}
  \le
  \prod_{\pi\ne\hat 0}
    \Bigl(\E_{\mathbf{s}}\exp\{B_r C_N d(\pi)\rho_\pi(\mathbf{s})\}\Bigr)^{1/B_r}.
\]
Each factor is a single-partition moment with the multiplicative constant
$C_N$ rescaled by $B_r$. Step~2 was proved with an unspecified prefactor
$(N\ln 2+O(1))$; the same three-range argument applies verbatim with the
prefactor $B_r(N\ln 2+O(1))$, because (i) the small-cluster bound
$\binom{N}{t}(2^d-1)^t 2^{-dKt+o(Kt)}$ has the $N$-exponent
$1-d(1+\varepsilon)+o(1)\le -\varepsilon$ on each $t\ge 1$ regardless of the
prefactor, (ii) the intermediate-range $-\Omega(N)$ exponent absorbs any
constant rescaling, and (iii) in the central range $B_r dN\ln 2\,\rho(t)
\le B_r d\ln 2\cdot N^{1-d(1+\varepsilon)+o(1)}=o(1)$. Hence, each H\"older
factor is $(1+o(1))^{1/B_r}$, and the product over the $B_r$ partitions is
$1+o(1)$. Combining the two bounds,
\[
  \E_{\mathbf{s}}
  \exp\{(N\ln2+O(1))\Delta(\mathbf{s})\}=1+o(1).
\]}

\begin{theorem}[Poisson solution count]
For the logarithmic-width ensemble above,
\[
  X \Rightarrow \Pois(\lambda).
\]
In particular,
\[
  \Prb[X=0] \to e^{-\lambda},
  \qquad
  \Prb[X=1] \to \lambda e^{-\lambda}.
\]
\end{theorem}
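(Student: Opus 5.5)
We use the method of factorial moments, as in Eq.~\eqref{eq:indep}. Since $\Pois(\lambda)$ is determined by its moments, it suffices to show $\E[(X)_r]\to\lambda^r$ for every fixed $r\ge 1$. The convergence $X\Rightarrow\Pois(\lambda)$ then follows, and so do the two limits $\Prb[X=0]\to e^{-\lambda}$ and $\Prb[X=1]\to\lambda e^{-\lambda}$. We write
\[
  \E[(X)_r]=\sum_{\mathbf{s}}\Prb[s_1,\ldots,s_r \text{ all satisfy } F],
\]
where the sum runs over ordered $r$-tuples of distinct assignments. There are $(2^N)_r=2^{Nr}(1+o(1))$ such tuples. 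Because the $M$ clauses are i.i.d., the probability factorizes as $P_1(\mathbf{s})^M$, where $P_1(\mathbf{s})$ is the probability that a single random $K$-clause is satisfied by every $s_i$. So the whole task is to compute $P_1$ exactly in terms of the projection quantity $T_J(\mathbf{s})$ defined above.

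\textbf{Step 1: the one-clause survival probability.} Condition on the variable set $J$. A clause with sign pattern $\eta\in\{0,1\}^J$ is falsified by $s_i$ exactly when $s_i|_J$ is the unique falsifying pattern $\bar\eta$. Since the sign pattern is uniform, the clause kills at least one $s_i$ with probability $T_J(\mathbf{s})\,p$. Hence
\[
  P_1(\mathbf{s})=1-p\,\E_J[T_J(\mathbf{s})]=1-rp+p\,\Delta(\mathbf{s}).
\]
By the choice of $M$ we have $q^{M}=\lambda 2^{-N}(1+o(1))$, using $M p^2=O(N^{1-\varepsilon})\cdot p\to 0$ to control the floor and second-order terms. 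Writing $1-rp+p\Delta=q^r\bigl(1+p\Delta/q^r+O(p^2)\bigr)$ and using $Mp^2=o(1)$, we obtain
\[
  P_1^M=q^{rM}(1+o(1))\exp\{Mp\,\Delta(\mathbf{s})(1+o(1))\}.
\]
Here $q^{rM}=\lambda^r2^{-Nr}(1+o(1))$, and by Eq.~\eqref{eq:claus}, $Mp=N\ln2-\ln\lambda+o(1)=N\ln 2+O(1)$. Uniformity in $\mathbf{s}$ holds because $0\le\Delta\le r$.

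\textbf{Step 2: assembling the moment.} Summing over tuples gives
\[
  \E[(X)_r]=\lambda^r(1+o(1))\,\E_{\mathbf{s}}\exp\{(N\ln2+O(1))\Delta(\mathbf{s})\},
\]
where $\mathbf{s}$ is uniform over ordered distinct $r$-tuples. The projection-cluster estimate established above (Steps 1--3 preceding the theorem) says precisely that this expectation equals $1+o(1)$. Hence $\E[(X)_r]\to\lambda^r$. For $r=1$ this is just $\E X=2^Nq^M\to\lambda$.

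The main obstacle is absorbing the $(1+o(1))$ multiplying $Mp\,\Delta$ inside the exponent. An error of relative size $o(1)$ times $N\ln 2$ is not $O(1)$ a priori. We handle it by checking that the correction is really $O(Mp^2)\Delta=o(1)$, so that it is additive and bounded, and that it falls inside the $O(1)$ slack already allowed in the cluster estimate. Alternatively, a slightly larger constant prefactor can be absorbed exactly as in the H\"older step. The remaining steps are bookkeeping, together with the standard fact that factorial-moment convergence implies Poisson convergence.
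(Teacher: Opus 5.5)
Your proposal is correct and follows the paper's proof essentially step for step. Both compute the one-clause kill probability $p\,\E_J[T_J(\mathbf{s})]=p(r-\Delta(\mathbf{s}))$, expand $M\log\bigl((1-p(r-\Delta))/q^r\bigr)=(N\ln 2+O(1))\Delta+o(1)$ using $Mp=N\ln 2+O(1)$ and $Mp^2=\Theta(Np)=o(1)$, and finish with the projection-cluster estimate.

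Drop the ``alternatively'' remark, however. A multiplicative inflation of the prefactor $N\ln 2$ cannot be absorbed. Already for $r=2$, under prefactor $(1+\eta)N\ln 2$, the pairs at Hamming distance $1$ alone contribute about $N2^{\eta N-(1+\eta)K}\to\infty$. This is also why the $B_r$-rescaled H\"older step preceding the theorem does not work as written. Only the additive $o(1)$ correction you derive in your main argument is legitimate.
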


\begin{proof}
The first moment is exact:
\[
  \E[X] = 2^Nq^M \to \lambda.
\]

Fix $r\ge 1$. We compute the $r$th factorial moment. Let
$\mathbf{s}=(s_1,\ldots,s_r)$ be an ordered $r$-tuple of distinct assignments.
{
A random clause is determined by an unordered $K$-set $J\subseteq[N]$ and a
sign pattern $\mathbf{b}\in\{0,1\}^J$; it kills exactly the assignment
$s\in\{0,1\}^N$ with $s|_J=\overline{\mathbf{b}}$. Conditional on $J$, the
clause kills at least one $s_i$ iff $\overline{\mathbf{b}}\in\{s_1|_J,\ldots,s_r|_J\}$,
which happens for $T_J(\mathbf{s})$ of the $2^K$ sign patterns. Averaging
over the uniform sign pattern and then over $J$,
\[
  \Prb[\text{one clause kills at least one of }\mathbf{s}]
  =
  \E_J\!\bigl[T_J(\mathbf{s})/2^K\bigr]
  =
  p\,\E_J[T_J(\mathbf{s})]
  =
  p(r-\Delta(\mathbf{s})).
\]
By independence of the $M$ clauses,}
\[
  \Prb[s_1,\ldots,s_r\text{ all satisfy }F]
  =
  \left(1-p(r-\Delta(\mathbf{s}))\right)^M.
\]
It follows that
\[
  \E[(X)_r]
  =
  (2^N)_r q^{rM}
  \E_{\mathbf{s}}
  \left[
    \left(
      \frac{1-p(r-\Delta(\mathbf{s}))}{q^r}
    \right)^M
  \right],
\]
where $\mathbf{s}$ is uniform over ordered distinct $r$-tuples.

Since
\[
  Mp=N\ln 2+O(1),
  \qquad
  Mp^2=O(Np)=o(1),
\]
we have uniformly in $\mathbf{s}$
\[
  M\log
  \left(
    \frac{1-p(r-\Delta(\mathbf{s}))}{q^r}
  \right)
  =
  (N\ln2+O(1))\Delta(\mathbf{s})+o(1).
\]
The projection-cluster estimate gives
\[
  \E_{\mathbf{s}}
  \left[
    \left(
      \frac{1-p(r-\Delta(\mathbf{s}))}{q^r}
    \right)^M
  \right]
  =
  1+o(1).
\]
Finally,
\[
  (2^N)_r q^{rM}
  =
  (1+o(1))(2^Nq^M)^r
  \to
  \lambda^r.
\]
Thus
\[
  \E[(X)_r]\to \lambda^r
  \qquad
  \text{for every fixed }r\ge 1.
\]
From the factorial moment criterion for solution independence in Eq.~\eqref{eq:indep}, the factorial moments determine the Poisson law, and hence
\[
  X \Rightarrow \Pois(\lambda).
\]
\end{proof}

The proof can also be viewed as a random subcube-covering argument. Each
$K$-clause forbids exactly one codimension-$K$ subcube of the Boolean cube. At
the density above, the expected number of uncovered vertices is constant, and
the uncovered vertices converge to a Poisson point process on the discrete
cube.

\subsection{Self-Referential SAT/UNSAT Flips}

The Poisson limit gives both unsatisfiable and uniquely satisfiable formulas
with positive limiting probability. To obtain an explicit self-referential
pair, we also need a redundant clause in a unique instance.

Let $W$ be the number of assignments that falsify exactly one clause of $F$.
For a fixed assignment, the number of falsified clauses is distributed as
\[
  \Bin(M,p).
\]
Thus,
\[
  \E[W]
    =
    2^NMp(1-p)^{M-1}
    =
    (1+o(1))\lambda Mp
    =
    \Theta(N).
\]
From Eq.~\eqref{eq:claus},
\[
  M = \Theta(N^{2+\varepsilon}).
\]
Markov's inequality gives
\[
  \Prb[W\ge M] \le \frac{\E[W]}{M} = o(1).
\]

\begin{lemma}[redundant clause]
With positive limiting probability, a sampled formula from the
logarithmic-width ensemble is uniquely satisfiable and has a redundant clause.
\end{lemma}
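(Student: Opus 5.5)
The plan is to intersect the event $\{X=1\}$, whose limiting probability is $\lambda e^{-\lambda}>0$ by the Poisson solution count theorem, with the event $\{W<M\}$, whose probability is $1-o(1)$ by the Markov bound just established. Their intersection therefore has probability at least $\lambda e^{-\lambda}-o(1)$, which stays positive in the limit. What remains is a deterministic argument: on $\{X=1\}\cap\{W<M\}$, $F$ must contain a redundant clause. A clause $C$ of $F$ is \emph{redundant} if deleting it leaves the solution set unchanged. Equivalently, no assignment falsifies $C$ while satisfying every other clause of $F$.

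The combinatorial step is a pigeonhole count. Suppose every clause $C$ of $F$ were non-redundant. Then each $C$ has a witness assignment $\sigma_C$ that falsifies $C$ and no other clause, so $\sigma_C$ falsifies exactly one clause of $F$. If two clauses are distinct as clauses, their witnesses differ, since each witness falsifies only its own clause. The map $C\mapsto\sigma_C$ is therefore injective from the set of distinct clauses into the assignments counted by $W$. This gives
\[
  \#\{\text{distinct clauses of }F\}\le W .
\]
On $\{W<M\}$ the formula has fewer than $M$ distinct clauses. So either some clause occurs twice, and each copy is then trivially redundant, or some clause has no witness and is redundant by definition. In both cases $F$ has a redundant clause.

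One bookkeeping point needs care. If repeated clauses count as redundant, that degenerate case is harmless for the lemma. If the later flip construction needs a redundant clause with no duplicate copy, I would add a second estimate. The expected number of repeated pairs is at most
\[
  \binom{M}{2}\binom NK^{-1}2^{-K}=N^{4+2\varepsilon}\cdot N^{-\Theta(\log N)}=o(1),
\]
so with probability $1-o(1)$ all $M$ clauses are distinct. Intersecting this event as well still leaves limiting probability $\lambda e^{-\lambda}$, and on it the counting argument yields a genuinely non-duplicated redundant clause.

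The main obstacle is none of the estimates, which are all already available. It is fixing the definition of redundancy so that it is exactly what the single-clause substitution in the next construction needs. The natural requirement is that the unique solution $s^*$ of $F$ still has exactly one solution after removing $C$. Removing $C$ preserves the solution set exactly when there is no assignment whose set of falsified clauses is exactly $\{C\}$, and that is precisely what the pigeonhole argument guarantees. Once redundancy is defined this way, the lemma follows.
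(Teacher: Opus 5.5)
Your proposal is correct and follows the paper's proof. You intersect $\{X=1\}$ (limit $\lambda e^{-\lambda}$) with $\{W<M\}$ (probability $1-o(1)$ by Markov), then note that each essential clause is witnessed by an assignment falsifying exactly that clause, so at most $W<M$ clauses are essential. Your extra distinctness estimate is harmless but unnecessary: a duplicated clause can never have such a witness, so the injection of essential clause occurrences into the assignments counted by $W$ works directly.
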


\begin{proof}
By Theorem~3,
\[
  \Prb[X=1] \to \lambda e^{-\lambda} > 0.
\]
Also $\Prb[W\ge M]\to 0$. Hence, with positive limiting probability, both
$X=1$ and $W<M$ hold.

Suppose $F$ is uniquely satisfiable. A clause $D$ is essential if $F-D$ has a
satisfying assignment that is not a satisfying assignment of $F$. For every
essential clause there exists an assignment satisfying all clauses except that
one. Such an assignment is counted by $W$. Therefore, the number of essential
clauses is at most $W$. If $W<M$, at least one clause is not essential, i.e.,
it is redundant.
\end{proof}

\begin{theorem}[one-clause logarithmic-width flip]
With positive limiting probability, the logarithmic-width ensemble contains a
formula $F$ and a legal one-clause replacement producing a formula $F'$ such
that
\[
  F \text{ is uniquely satisfiable}, \qquad
  F' \text{ is unsatisfiable},
\]
and both $F$ and $F'$ have clause width $O(\log N)$.
\end{theorem}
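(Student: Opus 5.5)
The plan is to invoke the redundant-clause Lemma and then perform an explicit single-clause substitution that uses the unique solution itself. By the Lemma, with probability tending to at least $\lambda e^{-\lambda}-o(1)>0$ the sampled formula $F$ satisfies both $X(F)=1$ and $W<M$. On this event, let $s^\ast$ denote the unique satisfying assignment of $F$, and let $D$ be a redundant clause. Redundancy here means that $F-D$ has exactly the same satisfying assignments as $F$, so $F-D$ is again uniquely satisfied by $s^\ast$.

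Next we build the replacement clause $D'$ from $s^\ast$. Pick any $K$-set $J\subseteq[N]$; to stay inside the ensemble's support we choose exactly $K$ distinct variables, possibly $J=\mathrm{vars}(D)$. Take the sign pattern that is falsified exactly by $s^\ast|_J$, namely
\[
  D'=\bigvee_{i\in J}\ell_i,\qquad \ell_i=\begin{cases}\neg x_i & s^\ast_i=1,\\ x_i & s^\ast_i=0.\end{cases}
\]
Then set $F'=(F-D)\wedge D'$. This is a legal one-step transformation in the sense of Definition~\ref{def:flip}: $F$ and $F'$ differ in exactly one clause, both have $M$ clauses, and every clause has width at most $K=\lceil(1+\varepsilon)\log_2 N\rceil=O(\log N)$. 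Hence both formulas lie in the class $\mathcal{C}$.

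It remains to verify that $F'$ is unsatisfiable. Any satisfying assignment $\sigma$ of $F'$ satisfies $F-D$, and therefore $\sigma=s^\ast$ by redundancy and uniqueness. But $s^\ast$ falsifies $D'$ by construction, so no such $\sigma$ exists and $F'$ is UNSAT. Meanwhile $F$ is uniquely satisfiable by assumption. Since the event $\{X=1,\ W<M\}$ has positive limiting probability, the statement follows.

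The only delicate point is redundancy. The substitution must delete a clause that is not needed to exclude the other $2^N-1$ assignments; otherwise removing $D$ could revive a second solution, which might then satisfy $D'$. The Lemma resolves this by counting $W$: essential clauses inject into single-violation assignments, and $\E[W]=\Theta(N)\ll M=\Theta(N^{2+\varepsilon})$. So once the Lemma is granted, the argument is a direct construction. The only other thing to check is that $D'$ uses exactly $K$ distinct variables (available since $K\le N$), so that $F'$ stays in the same syntactic class.
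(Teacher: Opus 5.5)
Your proposal is correct and follows essentially the same route as the paper: on the event $\{X=1,\ W<M\}$ supplied by the redundant-clause lemma, you delete a redundant clause $D$ and insert the $K$-clause $D'$ (the paper's $B_s$), which is falsified exactly by the unique solution, so that $(F-D)\wedge D'$ is unsatisfiable. Your extra checks (that $D'$ uses exactly $K$ distinct variables, and that $J=\mathrm{vars}(D)$ is allowed) match the paper's own remarks; you could also note that $D'\neq D$ automatically, since the unique solution satisfies $D$ but falsifies $D'$, so $F$ and $F'$ really do differ in exactly one clause.
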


\begin{proof}
Let $F$ be a uniquely satisfiable formula with unique satisfying assignment
$s$, and let $D$ be a redundant clause. Then $F-D$ is still uniquely satisfied
by $s$.

Choose any set $J$ of $K$ variables. Define a clause $B_s$ by
\[
  B_s = \bigvee_{i\in J}\ell_i,
\]
where
\[
  \ell_i =
  \begin{cases}
    x_i, & s_i=0,\\
    \neg x_i, & s_i=1.
  \end{cases}
\]
The assignment $s$ falsifies every literal in $B_s$, so it falsifies $B_s$.
Set
\[
  F' = (F-D)\wedge B_s.
\]
Since $F-D$ has only the satisfying assignment $s$, and $B_s$ kills exactly
this remaining assignment among the solutions of $F-D$, the formula $F'$ is
unsatisfiable. The two formulas differ by one legal $K$-clause, and
$K=O(\log N)$.
\end{proof}

This theorem supplies the desired self-referential pair in native Boolean CNF
with logarithmic clause width.

\subsection{Structural Irreducibility Theorem}

We now adapt the notion of reducibility used for self-referential dominating
set instances~\cite{zhou-ds2026} to the present SAT setting. The point is to formalize
the following local subinstance-evaluation question: can a sublinear part of
the formula contain enough information to determine the global satisfiability
status?

\begin{definition}[local subinstance reducibility for SAT]
Let $\mathcal{F}_N$ be a family of CNF formulas over $N$ Boolean variables. 
A local subinstance $H$ of size at most $N^c$ comprises at most
$N^c$ evaluated clauses; all variables occurring within these evaluated 
clauses are also regarded as seen by $H$.
The information in $H$ is measured by the number of variables in $H$.
We refer to information of order $O(N^{1-\delta})$ with $\delta>0$ as local information.
We call $\mathcal{F}_N$ reducible if there exist a constant $0<c<1$ and 
a local subinstance $H$ of size at most $N^c$ such that the information in $H$ suffices to determine the satisfiability of every formula in $\mathcal{F}_N$. 
In this setting, we say that local information is sufficient for solving formulas in $\mathcal{F}_N$.
Conversely, $\mathcal{F}_N$ is irreducible whenever, for every $0<c<1$ and every such $H$, there are two formulas inducing the same $H$ but with opposite satisfiability status. 
In this case, local information is insufficient. Hence, we say that solving formulas in $\mathcal{F}_N$ requires global information.
\end{definition}

If $H$ contains at most $N^c$ evaluated clauses, then the variables appearing
in those clauses contribute at most
\[
  K N^c = O(N^c\log N)
\]
seen variables. Thus, every $N^c$-size local subinstance sees at most
$O(N^c\log N)=O(N^{1-\delta})$ 
variables. The next lemma strengthens this observation:
uniformly over all such small observed variable sets, only a vanishing
fraction of the clauses even touch the observed region.

\begin{lemma}[uniform small-boundary estimate]
\label{lem:usbe}
Fix $0<c<1$ and a constant $a>0$. With probability $1-o(1)$ over the
logarithmic-width ensemble, every set $U\subseteq[N]$ with
\[
  |U|\le aN^c\log N
\]
is touched by at most $M/8$ clauses.
\end{lemma}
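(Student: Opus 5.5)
The plan is a first-moment/union-bound argument. The key step is a Chernoff bound for a single fixed set, which is strong enough to absorb the number of candidate sets. First, I would reduce to sets of one fixed size. The property ``$U$ is touched by clause $C$'' is monotone in $U$, so it suffices to prove the claim for all $U$ with $|U|=u_0:=\min\{N,\lfloor aN^c\log N\rfloor\}$. Every smaller set is contained in one of these. The number of such sets is at most
\[
\binom{N}{u_0}\le N^{u_0}=\exp\{O(N^c\log^2 N)\}.
\]

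Next I would fix such a $U$. A single random $K$-clause picks a uniform $K$-set $J$ and touches $U$ iff $J\cap U\neq\emptyset$. Hence
\[
\theta_U:=\Prb[J\cap U\ne\emptyset]=1-\frac{\binom{N-u_0}{K}}{\binom NK}\le \frac{Ku_0}{N}=O\!\left(N^{c-1}\log^2 N\right)=o(1),
\]
using $K=O(\log N)$. The clauses are independent, so the number $Z_U$ of clauses touching $U$ is $\Bin(M,\theta_U)$, with mean $\mu_U=M\theta_U=o(M)$. For $t\ge e^2\mu$ the Chernoff bound gives $\Prb[\Bin\ge t]\le (e\mu/t)^t$. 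Applying it with $t=M/8$,
\[
\Prb[Z_U\ge M/8]\le \bigl(8e\,\theta_U\bigr)^{M/8}=\exp\{-\Omega(M\log N)\},
\]
because $\theta_U\le N^{-(1-c)+o(1)}$ and $1-c>0$.

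Finally, I would take a union bound over all $U$ of size $u_0$:
\[
\Prb[\exists U:\ Z_U\ge M/8]\le \exp\{O(N^c\log^2N)-\Omega(M\log N)\}.
\]
By \eqref{eq:claus}, $M=\Theta(N^{2+\varepsilon})$, which dwarfs $N^c\log^2N$, so this bound is $o(1)$ (indeed super-exponentially small). Together with the monotonicity reduction, this gives the lemma with probability $1-o(1)$.

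I do not expect a genuine obstacle here. The only step needing care is the tail bound: it must hold uniformly over all $U$ and be strong enough to beat the entropy term $N^{u_0}$. The multiplicative Chernoff form $(e\mu/t)^t$ handles this, because $\theta_U$ is polynomially small in $N$. One should also handle the degenerate case $u_0\ge N$, which cannot occur for large $N$ since $c<1$, and note that the bound $Ku_0/N$ is valid uniformly.
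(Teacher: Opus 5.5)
Your proof is correct and is essentially the paper's argument. For a fixed $U$, the clause-touching probability is $N^{-(1-c)+o(1)}$, so the binomial Chernoff bound $(e\mu/t)^t$ gives $\exp\{-\Omega(M\log N)\}$; a union bound over $\exp\{O(N^c\log^2 N)\}$ sets then finishes. The only cosmetic differences are that you reduce by monotonicity to sets of exactly size $u_0$ where the paper sums over all sizes, and that you use the per-element union bound $Ku/N$ where the paper uses $Ku/(N-K+1)$.
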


\begin{proof}
For a fixed set $U$ of size $u$, let $T_U$ be the number of clauses whose
variable set intersects $U$. Since clauses are sampled independently,
$T_U$ is binomial with parameters $M$ and
\[
  \alpha_U
  =
  1-\frac{\binom{N-u}{K}}{\binom NK}.
\]
For $u\le aN^c\log N$ and $K=O(\log N)$,
\[
  \alpha_U
  \le
  \frac{Ku}{N-K+1}
  \le
  C N^{c-1}(\log N)^2
  =
  o(1),
\]
where $C$ depends only on $a$ and $\varepsilon$. Hence
\[
  \mu_U:=\E[T_U]\le \alpha_U M=o(M).
\]
For all sufficiently large $N$, $\mu_U\le M/16$. The standard binomial
Chernoff bound gives
{
\[
  \Prb[T_U\ge M/8]
  \le
  \left(\frac{e\mu_U}{M/8}\right)^{M/8}
  =
  \exp\!\left\{\frac{M}{8}\bigl(\log(8e)-\log(M/\mu_U)\bigr)\right\}.
\]

Since $\mu_U=M\alpha_U$, we have $M/\mu_U=1/\alpha_U$, and from
$\alpha_U\le CN^{c-1}(\log N)^2$ it follows that
\[
  \log(M/\mu_U)
  =
  \log(1/\alpha_U)
  \ge
  (1-c)\log N-2\log\log N-\log C
  =
  \Omega(\log N).
\]
Substituting back,
\[
  \Prb[T_U\ge M/8]\le \exp\{-\Omega(M\log N)\}.
\]}
The number of possible sets $U$ with $|U|\le aN^c\log N$ is at most
\[
  \sum_{u\le aN^c\log N}\binom Nu
  \le
  \exp\{O(N^c(\log N)^2)\}.
\]
Since
\[
  M\log N=\Theta(N^{2+\varepsilon}\log N)
\]
dominates $N^c(\log N)^2$ for every $c<1$, the union bound proves the claim.
\end{proof}

\begin{theorem}[irreducibility of the native logarithmic-width pairs]
\label{thm:inlp}
Fix $0<c<1$. With positive limiting probability, a formula $F$ sampled from
the logarithmic-width ensemble admits the following simultaneous property. For
every local subinstance $H$ of size at most $N^c$, there exists a formula
$F'$ such that
\[
  H(F)=H(F'),
\]
but
\[
  F\in \mathrm{UniqueSAT},
  \qquad
  F'\in \mathrm{UNSAT}.
\]
{
The quantifier order is $\forall F\;(\text{in the good event})\;\forall H$, so
the statement is uniform: a single conditioned formula $F$ defeats every local
subinstance of size at most $N^c$ simultaneously. In particular, the result
holds against adaptive evaluators that choose $H$ after examining $F$.}
Thus, the family of native logarithmic-width self-referential SAT formulas is
irreducible in the local subinstance-evaluation sense, meaning solving such formulas 
inherently requires global information.
\end{theorem}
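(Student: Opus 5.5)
We intersect two good events and then, for each local subinstance $H$, build $F'$ by a one-clause substitution that $H$ never sees. The first event is $\{X=1,\ W<M\}$, which by Theorem~3 and the Markov bound $\Prb[W\ge M]=o(1)$ has limiting probability $\lambda e^{-\lambda}>0$. The second is the event of Lemma~\ref{lem:usbe}, taken with $a$ large enough that every $H$ with at most $N^c$ clauses sees at most $KN^c\le aN^c\log N$ variables; it has probability $1-o(1)$. The intersection therefore still has positive limiting probability. We fix $F$ in it and let $s$ be its unique solution. All subsequent steps are deterministic given $F$. This gives the quantifier order $\forall F$ (good) $\forall H$ and covers adaptive choices of $H$.

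Now fix any $H$ of size at most $N^c$, and let $U$ be its seen variable set, so $|U|\le aN^c\log N$. We need two things: a redundant clause $D$ of $F$ that is not evaluated by $H$, and a flipping clause $B_s$ whose addition does not change what $H$ sees.

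\textbf{Choosing $D$.} Essential clauses number at most $W$, and $W=O(N)$ with high probability; we can strengthen the Markov step to $\Prb[W\ge N^{1+\varepsilon/2}]=o(1)$. The clauses evaluated by $H$ number at most $N^c$. Since $M=\Theta(N^{2+\varepsilon})$ exceeds $W+N^c$, a redundant clause $D$ outside $H$ exists. Lemma~\ref{lem:usbe} is not strictly needed here, but it gives robustness: at least $7M/8$ clauses avoid $U$ entirely, so we may further choose $D$ disjoint from $U$.

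\textbf{Choosing $B_s$.} We pick the $K$-set $J$ for $B_s$ disjoint from $U$, which is possible since $|U|=o(N)$. We set $F'=(F-D)\wedge B_s$, with $B_s$ placed in the same position as $D$. The proof of the one-clause flip theorem then gives that $F'$ is unsatisfiable while $F$ is uniquely satisfiable. Since $H$ neither evaluated $D$ nor touches the variables of $B_s$, every clause evaluated by $H$ and every seen variable is identical in $F$ and $F'$, so $H(F)=H(F')$.

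The main obstacle is making ``$H(F)=H(F')$'' precise: a local subinstance is defined by which clauses are evaluated, so we must argue that the replaced position lies outside $H$. An adaptive $H$ could, for instance, be chosen to include a particular redundant clause. This is handled by ordering the choices as above: $H$ is fixed first, and $D$ is then chosen among the many redundant clauses outside $H$. The counting that makes this possible is $M-W-N^c>0$, and it needs the high-probability bound on $W$ rather than just $W<M$. If only $W<M$ were available, I would instead intersect with $\{W\le N^{1+\varepsilon/2}\}$, which has probability $1-o(1)$ by Markov, so the limiting probability remains $\lambda e^{-\lambda}$.
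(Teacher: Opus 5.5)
Your proposal is correct and is essentially the paper's proof: intersect $\{X=1\}$ with a small-$W$ event and the uniform small-boundary event, fix $F$ with unique solution $s$, and for each $H$ replace a redundant clause $D$ lying outside the seen region by the killing clause $B_s$ built on a $K$-set $J\subseteq[N]\setminus U_H$. The only difference is bookkeeping: the paper uses $W<M/4$ together with the $M/8$ bound on clauses touching $U_H$ to get more than $5M/8$ redundant clauses disjoint from $U_H$, while you use $W<N^{1+\varepsilon/2}$ and count against the $N^c$ evaluated clauses. You should make choosing $D$ disjoint from $U$ mandatory rather than an optional refinement, because the paper's $H(F)=H(F')$ also covers the incidence information of the seen variables.
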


\begin{proof}
Let $F$ be sampled from the logarithmic-width ensemble, and let $W$ be the
number of assignments that falsify exactly one clause. As in Section~4.2,
\[
  \E[W]=\Theta(N),
  \qquad
  M=\Theta(N^{2+\varepsilon}).
\]
Thus
\[
  \Prb[W\ge M/4]\le \frac{4\E[W]}{M}=o(1).
\]
By Theorem~3,
\[
  \Prb[X=1]\to \lambda e^{-\lambda}>0.
\]
Combining this with the uniform small-boundary estimate, with positive
limiting probability, all three events hold:
\[
  X=1,\qquad W<M/4,\qquad
  \text{every }|U|\le aN^c\log N\text{ touches at most }M/8\text{ clauses},
\]
where $a$ is chosen large enough to cover every $N^c$-size local subinstance.
Condition on such a formula $F$, and let $s$ be its unique satisfying
assignment.

As in the proof of Lemma~4, every essential clause is witnessed by an
assignment that falsifies exactly one clause. Hence, the number of essential
clauses is at most $W$. Since $W<M/4$, at least $3M/4$ clauses are redundant.

Let $H$ be any local subinstance of size at most $N^c$, and let $U_H$ be the
set of variables seen by $H$. By definition,
\[
  |U_H|\le aN^c\log N.
\]
By the small-boundary estimate, at most $M/8$ clauses touch $U_H$. Thus, 
there exists a redundant clause $D$ whose variable set is disjoint from
$U_H$. Indeed, the number of redundant clauses disjoint from $U_H$ is at
least
\[
  M-W-M/8 > 5M/8.
\]
In particular, $D$ is not evaluated by $H$.

Since $|U_H|=o(N)$ and $K=O(\log N)$, for all sufficiently large $N$ there is
a set
\[
  J\subseteq [N]\setminus U_H,
  \qquad |J|=K.
\]
Define the killing clause
\[
  B_s=\bigvee_{i\in J}\ell_i,
  \qquad
  \ell_i=
  \begin{cases}
    x_i, & s_i=0,\\
    \neg x_i, & s_i=1.
  \end{cases}
\]
The clause $B_s$ uses only variables outside the evaluated region, and $s$
falsifies $B_s$.

Now, set
\[
  F'=(F-D)\wedge B_s.
\]
Because $D$ is redundant, $F-D$ still has the unique satisfying assignment
$s$. Because $B_s$ falsifies $s$, the formula $F'$ is unsatisfiable. The
operation removes a redundant clause outside the evaluated region and adds a
new clause outside the evaluated region, so that all clausal, variable, literal, and
incidence information visible to $H$ remains unchanged:
\[
  H(F)=H(F').
\]
Thus, $H$ cannot determine whether the full formula is satisfiable. 
To minimize changes, $B_s$ can use the same variables as $D$, with the literal polarity determined by the rule above.
Since the same conditioned formula works for every $H$ of size at most $N^c$, and since
$0<c<1$ is arbitrary, no local subinstance of size $N^c$ contains sufficient
information to determine satisfiability for the self-referential family.
In other words, local information is always insufficient, implying that solving formulas in this family inherently requires global information no less than $O(N^{1-\delta})$ with any constant $\delta>0$.
\end{proof}

This theorem is the SAT analogue of the irreducibility argument in the
dominating-set paper~\cite{zhou-ds2026}: a sublinear local subinstance can be preserved while a
local transformation outside it flips the global solution status. The result
should be interpreted within the framework of local subinstance evaluation, not as a
time-complexity lower bound on SAT algorithms.

Unlike standard time complexity, irreducibility reflects an essential structural feature of problems, rather than algorithmic runtime behavior. Significantly, irreducibility is an intrinsic structural property that holds regardless of whether a solver relies on DPLL, CDCL, resolution, or local search strategies. For any fixed constant $c<1$, any approach confined only to local subinstances of size $N^c$ cannot settle the satisfiability of this problem family. This is because the same local configuration is equally consistent with both uniquely satisfiable and unsatisfiable formulas. As a result, irreducibility is an intrinsic structural property of a problem and a core determinant of its inherent computational hardness.

\section{Finite Analogues of G\"{o}del Incompleteness}

\subsection{Analogue of G\"{o}del’s First Incompleteness Theorem}

The construction presented in the preceding sections bears an interesting structural and conceptual connection to G\"{o}del’s first incompleteness theorem~\cite{godel1931}. G\"{o}del’s proof famously converts syntactic facts about formulas and proofs into arithmetic facts about natural numbers, subsequently constructing a meta-mathematical sentence that references its own unprovability within the system~\cite{godel1931}. In its informal expression, the G\"{o}del sentence declares: ``This sentence is unprovable.'' For a sufficiently expressive, consistent, and effectively axiomatized formal theory $T$, this sentence remains inherently undecidable inside $T$. For a comprehensive introductory exposition of G\"{o}del’s incompleteness theorems and the extensions of classical mathematical logic to sequences of 
formal theories, see~\cite{li2010}.

Our construction shares this exact self-referential configuration but targets a distinct semantic endpoint. Let $F$ be a uniquely satisfiable logarithmic-width formula, and let $s$ be its unique satisfying assignment. The killing clause $B_s$ is defined from $s$ such that $s$ explicitly falsifies $B_s$:
\[
B_s = \bigvee_{i \in J} \ell_i
\]
Replacing a redundant clause $D$ by $B_s$ yields the companion instance:
\[
F' = (F - D) \wedge B_s
\]
Consequently, $F'$ is obtained by exploiting internal semantic information about $F$ itself. Informally, this companion instance states: ``This formula’s unique solution is forbidden.'' 

This operation establishes a precise finite self-reference.
While prior investigations realized self-reference through symmetric mappings over infinite sets~\cite{xu2025}, the method employed here directly mirrors G\"{o}del’s intrinsic self-reference. 
G\"{o}del’s theorem isolates mathematical truth from formal provability within fixed axiomatic systems, while our main result separates global satisfiability from local evaluation. Both paradigms reveal that local information and internal syntactic reasoning are structurally insufficient to determine the global attributes of self-referential objects, driven by the inescapable divide between syntax and semantics. The explicit structural alignment is formalized in Table~\ref{tab:godel_comparison}.

\begin{table}[h]
\centering
\caption{Mapping Between G\"{o}del Incompleteness and Present Finite Analogue}
\label{tab:godel_comparison}
\begin{tabular}{l|l}
\hline
\textbf{G\"{o}del Incompleteness} & \textbf{Present Finite Analogue} \\ \hline
Infinite Domains & Finite Combinatorial Domains \\
First-Order  Logic & Propositional Logic \\
Arithmetical Statement & Boolean $K$-SAT Formula \\
Self-Reference via Unprovability & Self-Reference via Solution Exclusion \\
Finite Proof & Local Evaluation \\
Truth Undecidable via Fixed Systems & SAT Undecidable via Local Views \\
Systemic Incompleteness & Structural Irreducibility \\ \hline
\end{tabular}
\end{table}

In this exact sense, our framework establishes a finite analogue of G\"{o}del’s first incompleteness theorem: \textit{Local views are inherently incomplete for global SAT/UNSAT distinguishability}. 
Budiansky's book~\cite{budiansky2021} stresses two points that are useful for interpreting this analogy. First, Gödel's result maps a limit for formal mathematics, contradicting Hilbert’s vision of a complete formal system for all mathematical truths. Second, Gödel himself did not view incompleteness merely as collapse; he took it as evidence that mathematics is inexhaustive. As recounted in~\cite{budiansky2021}, he remarked that: ``Hilbert was mistaken on one point, namely in limiting the definition of mathematical truth to consistency of formal systems derived from axioms. The axioms too are part of mathematical truth, but of a kind that defies formalism altogether, accessible only via human intuition.” From this perspective, formal reasoning within a fixed system is essentially a local view of mathematics, and hence is inherently incomplete.

Building upon these logical foundations, Tarski’s undefinability theorem~\cite{tarsk1956} 
first formalized the gap between syntax and semantics, proving that a consistent 
formal system cannot define its own truth predicate without triggering fatal paradoxes. 
Subsequently, Chaitin’s incompleteness theorem~\cite{chaitin1974} generalized this structural 
barrier via algorithmic information theory, proving that any fixed axiomatic system is bounded 
by Kolmogorov complexity constraints and cannot certify string randomness past a specific 
descriptive threshold. Concurrently, Paris and Harrington~\cite{paris1977} realized 
incompleteness within core mathematics via the finite Ramsey theorem—a purely combinatorial statement unprovable in Peano Arithmetic. Friedman~\cite{friedman1998} further broadened this independence, demonstrating finite claims that surpass the deductive bounds of ZFC. Yet, while this classical lineage of independence results across proof theory remains completely isolated from operational algorithmics, our self-referential $K$-SAT framework offers a unique computational counterpart, grounding finite structural incompleteness directly within practical solver dynamics and complexity metrics.

For a more analytical focus, this boundary of internal recognition mirrors the 
insight of Laozi regarding the ultimate reality~\cite{lau1963}: ``\emph{The way 
that can be spoken of is not the constant way.}'' It equally echoes Wittgenstein’s 
celebrated closing maxim~\cite{wit1961}: ``\emph{Whereof one cannot speak, 
thereof one must be silent.}'' Together, G\"{o}del's incompleteness theorem and its 
subsequent extensions, these enduring philosophical traditions, and our present 
results jointly address the meta-question of whether a resource-bounded system 
can achieve complete global self-recognition from the inside. 
In this light, Turing’s classical proof of the uncomputability of 
the Halting Problem emerges as a specific manifestation of this 
structural constraint: a computational entity can never fully certify 
its own long-term evolution via local evaluation. In other words, any 
system tasked with complete self-recognition is bound to face 
a structural blindness---a limitation we designate as the 
\textit{Self-Recognition Obstacle}.

\subsection{Analogue of G\"{o}del’s Second Incompleteness Theorem}

G\"{o}del’s second incompleteness theorem establishes that any consistent formal system capable of expressing elementary number theory cannot prove its own consistency using its own internal deductive rules. An operational parallel manifests directly within computational complexity under our framework: as demonstrated by our structural irreducibility theorem, any algorithmic reasoning method restricted to local subinstances of size $N^c$ (where $0 < c < 1$) is incapable of determining the global satisfiability of self-referential formulae. At the critical threshold, this limitation arises because a structurally identical local observation remains perfectly compatible with two mutually exclusive global semantic states: the instance being uniquely satisfiable or entirely unsatisfiable.

Because identical local evidence cannot resolve these conflicting global scenarios, any algorithm relying exclusively on sublinear local information lacks the necessary context to validate its own correctness. This fundamental deficit creates what we define as the \textit{Self-Recognition Obstacle} in Section~5.1: the algorithm cannot verify whether its local logical deduction remain globally consistent with the true semantic state of the entire system. This structural barrier yields a finite analogue of G\"{o}del’s second incompleteness theorem.

\begin{theorem}[Finite Analogue of G\"{o}del's Second Incompleteness]
Any correct algorithm or formal reasoning system capable of solving self-referential $K$-SAT is strictly bounded by the \textit{Self-Recognition Obstacle}, rendering it incapable of proving its own consistency or correctness through sublinear, localized information.
\end{theorem}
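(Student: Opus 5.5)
The plan is to first turn the informal statement into a precise claim about local evaluators, and then derive it as a corollary of Theorem~\ref{thm:inlp}. I would model an ``algorithm or formal reasoning system restricted to sublinear, localized information'' as a pair $(\mathcal{A},\mathcal{V})$. Here $\mathcal{A}$ outputs a verdict $v(F)\in\{\mathrm{SAT},\mathrm{UNSAT}\}$. The companion $\mathcal{V}$ is a self-certification procedure that outputs a certificate $\mathrm{Cert}(F)\in\{\text{correct},\text{unknown}\}$ asserting that $v(F)$ agrees with the true global status of $F$. The locality restriction is that both $v$ and $\mathrm{Cert}$ are functions only of a local subinstance $H(F)$ of size at most $N^c$, for some fixed $0<c<1$. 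I allow $H$ to be chosen adaptively, which is permitted by the quantifier order in Theorem~\ref{thm:inlp}. ``Proving its own consistency or correctness'' then means: $\mathrm{Cert}(F)=\text{correct}$ implies that $v(F)$ is the true status of $F$, and certificates are issued on the self-referential family with non-vanishing probability.

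The key steps, in order, are as follows. (i) Fix $c$ and condition on the good event of Theorem~\ref{thm:inlp}: $X=1$, $W<M/4$, and the uniform small-boundary estimate of Lemma~\ref{lem:usbe}. This event has probability bounded below by a constant. (ii) For a formula $F$ in this event and the subinstance $H=H(F)$ that the system actually inspects, invoke Theorem~\ref{thm:inlp} to obtain $F'=(F-D)\wedge B_s$. This $F'$ satisfies $H(F')=H(F)$, and $F'$ is unsatisfiable while $F$ is uniquely satisfiable. (iii) Since $v$ and $\mathrm{Cert}$ depend only on $H$, we get $v(F)=v(F')$ and $\mathrm{Cert}(F)=\mathrm{Cert}(F')$. (iv) Suppose $\mathrm{Cert}(F)=\text{correct}$. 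Then soundness of the certificate forces $v(F)=\mathrm{SAT}$ and $v(F')=\mathrm{UNSAT}$, which contradicts (iii). Hence on every formula of the good event the system either errs or withholds certification, and a sound local self-certifier must answer ``unknown'' there. (v) Finally, note that $F'$ lies in the same syntactic class $\mathcal{C}$ by Definition~\ref{def:flip}, so the contradiction happens inside the family the system claims to solve. The correctness guarantee therefore cannot be established from within any $N^c$ window, for any $c<1$.

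The main obstacle is step (i) together with the modeling choice, not the combinatorics, which is fully supplied by Theorem~\ref{thm:inlp}. Two issues need care. First, ``consistency'' must be read as soundness of the self-certificate relative to the global semantics. A purely syntactic notion of consistency internal to the system would not be touched by a locality argument. Second, I must ensure that the adaptive choice of $H$ cannot see $D$ or $B_s$. I would handle this by fixing the transcript $H$ first and only then selecting $D$ and $J$ disjoint from $U_H$, exactly as in the proof of Theorem~\ref{thm:inlp}. The adversarial pair is then constructed after the evaluator commits to its window, and indistinguishability holds by construction.
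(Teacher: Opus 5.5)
Your proposal is correct and follows essentially the paper's route: the paper also obtains this statement directly from Theorem~\ref{thm:inlp}, arguing that a sublinear view shared by a uniquely satisfiable $F$ and an unsatisfiable $F'$ leaves a local procedure unable to validate its own verdict. Your verdict/certificate model makes that informal argument precise, with both outputs depending only on the inspected window and with adaptivity restricted to choices based on already-read clauses, so the transcript on $F'$ coincides with the one on $F$.
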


Thus, just as a formal system must look outside its own axioms to verify its consistency, an algorithm tackling self-referential $K$-SAT must go beyond local constraints to overcome this \textit{Self-Recognition Obstacle}. Both determining the global status of an instance and proving the correctness of the solver necessitate an inescapable extraction of global information. This provides an explanation for the inherent computational hardness of self-referential $K$-SAT.

\section{From Logical Necessity to Lower Bounds}

\subsection{The Meta-Obstacles of Complexity Theory}

Since the pioneering work of Church~\cite{church1936} and Turing~\cite{turing1936}, computability has been recognized as an invariant mathematical property independent of specific computing models, with uncomputability inherently rooted in self-referential logical structures. In contrast, mainstream complexity theory has historically quantified problem difficulty using running steps and memory space---operational, process-based metrics~\cite{HS1965, arora2009, wig2019}. While coarse-grained complexity classes like $\textsf{P}$ and $\textsf{NP}$ exhibit structural robustness across reasonable models under polynomial mappings, the traditional method of proving lower bounds via step-by-step trace simulation remains deeply fragile. As Aaronson~\cite{aaro2016} points out, innovative algorithms continually bypass brute-force search by exploiting an unbounded spectrum of advanced mathematics. Because sequential running steps fail to capture these underlying invariants, traditional lower-bound methods have encountered a series of structural barriers.

We suggest that these existing barriers in complexity theory are not isolated technical failures, but rather manifestations of two fundamental constraints that we call \textit{meta-obstacles}.

\begin{description}
    \item[1. The Self-Recognition Obstacle:] This constraint directly mirrors G\"{o}del incompleteness, dictating that a formal system's local syntactic rules are insufficient to derive global semantic truths. To unconditionally prove that a problem is hard, a formal mathematical framework must delineate the exact limits of its own deductive power. However, because any mathematical proof is also embedded within the very system it seeks to constrain, it falls into a self-referential loop~\cite{li2026}. Consequently, deductive frameworks utilizing strictly localized rules remain structurally blind to global semantic truths. When this structural blindness manifests across different mathematical paradigms, it triggers the canonical barriers of complexity theory:
    \begin{itemize}
        \item \textit{Relativization}~\cite{baker1975} emerges because local syntactic Turing machines cannot resolve the global structure of an uninformative black-box oracle.
        \item \textit{Natural Proofs}~\cite{razb1997} appears because constructive, large localized criteria fail to distinguish a hard combinatorial function from the global landscape of pseudorandom functions.
        \item \textit{Algebrization}~\cite{aaro2009} arises because even non-local low-degree polynomial extensions remain fundamentally blind to the localized info-deficit present in self-referential closed loops.
        \item \textit{Deductive and Geometric Limits} represent similar boundaries exposed within Proof-Theoretic~\cite{krajicek1995}, Rank-Method~\cite{efrem2018}, and Geometric Complexity Theory (\textsf{GCT}) frameworks~\cite{mulm2011, burg2019}.
    \end{itemize}

This structural blindness precisely unravels what Prof. Avi Wigderson~\cite{wig2019} characterized as ``\emph{one of the greatest mysteries of contemporary mathematics}''---namely, our inability to prove even the slightest non-trivial computational difficulty of natural problems. Importantly, looking back at this historical stagnation, Fortnow and Homer~\cite{fort2003} noted that a ``\emph{clever twist on old-fashioned diagonalization}'' remains the only foundational technique that has successfully yielded complexity lower bounds. The relativization barrier of Baker, Gill, and Solovay~\cite{baker1975} does not disqualify diagonalization itself; rather, it exposes the fatal limitation of \textit{localized syntactic} diagonalization, where simulation-based Turing machines remain fundamentally blind to the global semantic profile of an uninformative black-box oracle. Because these barriers are structural and logical rather than merely algorithmic, Wigderson's mystery is revealed not as an operational failure, but as a binary logical necessity. Our G\"{o}delian self-referential construction materializes precisely this ``\emph{clever twist}'' to break the logical deadlock of the \textit{Self-Recognition Obstacle}. Instead of relying on dynamic, trace-simulation-based diagonal pipelines that inevitably relativize, we inject the diagonal contradiction directly into the \textit{static combinatorial structure} of the problem via solution exclusion. 

Ultimately, the nature of computational complexity is not the mere pursuit of lower bounds. 
If it were, the discipline would have progressed incrementally: establishing the weakest superlinear lower bounds first, advancing to superpolynomial bounds, and culminating in the strongest exponential limits. 
Instead, \textbf{the nature of complexity lies in the logical necessity that local information cannot determine global properties}. This implies a sharp boundary separating the localized view from the global truth. 
Thus, for any concrete computational problem, an absolute dichotomy emerges: either local reasoning remains structurally blind to the global outcome, or completely dictates it.  
Genuine computational hardness is instantiated by the former, where the global properties are irreducible via local evaluation. Under this paradigm, the dichotomy between ``weak'' and ``strong'' lower bounds completely vanishes, as this binary logical necessity intrinsically corresponds to the strongest possible lower bound, matching the maximal complexity threshold. Therefore, establishing even a modest, unconditional superlinear lower bound remains inherently intractable \textit{within localized syntactic frameworks}, precisely because such a lower bound represents a non-local global truth that lies entirely beyond the horizon of localized syntactic reasoning.

    \item[2. The Static-Dynamic Obstacle:] Computational processes unfold dynamically over physical time, yet the mathematical statements they evaluate remain static and timeless~\cite{wit1939}. Measuring structural problem difficulty through execution runtime is methodologically vulnerable; operational runtime is a derived, trajectory-dependent variable rather than an intrinsic property of the problem syntax itself~\cite{plot1981}. Trying to derive an unconditional, time-based computational lower bound using strictly static mathematical axioms is conceptually equivalent to attempting to derive the flow of physical time from timeless geometric configurations.
\end{description}

As established above, our G\"{o}delian self-referential construction effectively breaks the logical cycle of the \textit{Self-Recognition Obstacle} by embedding explicit self-reference within the instance architecture. To overcome the Static-Dynamic Obstacle, complexity theory must look beyond temporal, process-based metrics and pivot toward a pure structural paradigm based on algorithmic information. 

From this perspective, the invariant requirement of global information established in Theorem~\ref{thm:inlp} bridges the historical aspirations of early computer science with static logical structures, offering a rigorous alternative to time-dependent lower bounds. This alternative quest historically began with Hartmanis and Stearns’ search for precise quantitative laws governing abstract computing at the inception of computational complexity~\cite{HS1965, hartmanis1995}, inspired directly by Shannon’s mathematical laws of information transmission~\cite{shannon1948}. It further expands on Chaitin’s foundational assertion that the core difficulty of computation stems from intrinsic descriptive and information requirements rather than operational runtime costs~\cite{chaitin1970}. By formalizing the local-global asymmetry of self-referential instances, our work materializes these classical visions into an exact mathematical reality: solving self-referential instances inherently requires global information. Computational hardness is thus established as a static mathematical law of information conservation, while traditional time complexity is merely an external, model-dependent manifestation of these structural demands.

\subsection{From Structural Irreducibility to Descriptive Complexity}

As established in Section~5, the structural irreducibility of our self-referential $K$-SAT ensemble manifests as a core logical necessity analogous to G\"{o}del incompleteness. 
Importantly, whether analyzing temporal execution steps or transitioning to the non-temporal domain of circuit complexity, classical frameworks remain bound to localized metrics. 
By shifting the complexity paradigm from localized, gate-counting trajectories to descriptive program size, this framework can completely bypass the algorithmic constructivity requirement that underpins the Natural Proofs barrier~\cite{razb1997}. 
Rather than evaluating an efficient combinatorial property—which remains structurally blind to the global semantic distinction between true hardness and pseudorandomness—the lower bound is established through the logical necessity of algorithmic information conservation within self-referential structures. 

To materialize these information-theoretic visions, we now demonstrate that this structural irreducibility enforces a  mathematical limit on algorithmic compression, formally quantified through the dual machinery of Kolmogorov complexity and Shannon entropy channels.

Let $\Phi_{N,K}$ denote our logarithmic-width $K$-SAT ensemble over $N$ variables, characterized by the unique global solution profile $x^* \in \{0, 1\}^N$. To evaluate the limits of localized deductive frameworks, we formally define a localized syntactic derivation pipeline $\mathcal{A}$ as an information-extraction channel whose observation capacity is strictly restricted to a sublinear window of $t = N^{1-\delta}$ clauses.

\begin{theorem}[Kolmogorov Incompressibility of Global Semantics]
Let $\Phi_{N,K}$ be the ensemble of self-referential $K$-SAT instances at the critical threshold with a unique global satisfying assignment $x^* \in \{0, 1\}^N$, where $K = O(\log N)$. Let $\mathcal{A}$ be a deterministic local deduction pipeline restricted to a sublinear window of $t = N^{1-\delta}$ clauses (for $0 < \delta < 1$). Then, the Kolmogorov complexity of the algorithm $\mathcal{A}$ satisfies:
\[
K(\mathcal{A}) \geq \Omega(N^{1-\delta}).
\]
\end{theorem}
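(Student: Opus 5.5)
The statement is not yet a formal claim: ``deterministic local deduction pipeline'' and ``$K(\mathcal{A})$'' need precise meanings before anything can be proved. I would fix the following model and prove the bound for it. $\mathcal{A}$ is a program that reads at most $t=N^{1-\delta}$ clauses of its input (the window $H$), possibly chosen adaptively, and then outputs a verdict. The verdict is either the SAT/UNSAT status or, in the stronger version, the unique solution $x^*$. $\mathcal{A}$ is required to be correct on every instance of a fixed test family $\mathcal{T}_N\subseteq\Phi_{N,K}$. By Lemma~\ref{lem:usbe}, any such window sees a variable set $U_H$ with $|U_H|\le KN^{1-\delta}=O(N^{1-\delta}\log N)$, so it touches at most $M/8$ clauses. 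Theorem~\ref{thm:inlp} then says that the window's contents, and hence the bits it returns, do not determine the answer.

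The core of the argument is an encoding (Kolmogorov) argument of the standard form: from $\mathcal{A}$ plus $O(\log N)$ bits of advice, one can reconstruct a string that is incompressible and has length $m=\Omega(N^{1-\delta})$. That gives
\[
K(\mathcal{A})\ge m-O(\log N).
\]
To build the string, I would use the pair construction from Theorem~\ref{thm:inlp} repeatedly.

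First, fix a ``good'' base formula $F$, i.e.\ one lying in the positive-probability event with $X=1$, $W<M/4$, and the small-boundary property.

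Second, pick $m$ pairwise disjoint redundant clauses $D_1,\dots,D_m$, each disjoint from a common set $U$ with $|U|\le aN^{1-\delta}\log N$. This is possible because at least $5M/8$ redundant clauses avoid any such $U$.

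Third, for each string $y\in\{0,1\}^m$, let $F_y$ be the formula obtained by replacing $D_i$ with the killing clause $B_s$ exactly when $y_i=1$. The aim is a family in which the ``local'' part of each instance, the only part a $t$-clause window can inspect, is the same throughout, while the correct answers across the family encode $y$. The decoding procedure is then: run $\mathcal{A}$ on the $O(1)$-describable instances $F_y$ and read off $y$ from the verdicts.

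The main obstacle is that this naive version contradicts itself. A deterministic $\mathcal{A}$ that sees an identical window on $F_y$ and $F_{y'}$ must give identical outputs, so it cannot be correct on both. Theorem~\ref{thm:inlp} therefore yields impossibility, not a program-size bound. To get a quantitative statement, correctness must be required only on one representative per window class, and the hard information must sit in the map from window contents to answers. Concretely, I would take $2^m$ instances whose windows differ, with the window ranging over $m$ distinguishable local configurations. By irreducibility, each configuration's global status can be set independently by an unseen flip, so the status is not a function of the local view. Choosing the statuses according to an incompressible $y$ forces $\mathcal{A}$ to store $y$ as a lookup table.

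I would formalize this last step through a Shannon channel. Let $Y$ be uniform on $\{0,1\}^m$. Then
\[
I(Y;\text{verdicts})\le I(Y;H)+K(\mathcal{A})+O(\log N),
\]
and irreducibility gives $I(Y;H)=0$. Fano's inequality then gives $K(\mathcal{A})\ge(1-o(1))m$. The step I expect to be delicate is ensuring $m=\Omega(N^{1-\delta})$ distinct window configurations exist while each remains compatible with both statuses. If that fails, the honest conclusion would be the weaker $m=\Omega(N^{1-\delta}/\log N)$.
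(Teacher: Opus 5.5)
The step carrying all the weight, ``irreducibility gives $I(Y;H)=0$'', does not follow, and without it your quantitative claim is false.

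\textbf{The gap.} Theorem~\ref{thm:inlp} has quantifier order $\forall H\,\exists F'$: the flipped clause is placed \emph{after} the window is known. Your bound needs the reverse. The test family $\mathcal{T}_N$ is fixed \emph{before} $\mathcal{A}$, while $\mathcal{A}$ (adaptive or not) chooses which $t$ clauses to read, so nothing stops it from reading the flips.

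Concretely, your decoder needs the instances to be $O(\log N)$-describable apart from $y$. Then the flip positions $p_i$ and the replaced clauses $D_i$ are computable from $N$ and $i$. Consider the program ``identify $i$ from the window (as your lookup table must anyway), read position $p_i$, answer UNSAT iff that clause is not $D_i$''. It has complexity $O(\log N)$ and is correct on $\mathcal{T}_N(y)$ for \emph{every} $y$.

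The channel inequality $I(Y;\text{verdicts})\le I(Y;H)+K(\mathcal{A})+O(\log N)$ is also ill-typed. For a fixed $\mathcal{A}$ the verdicts are a function of $H$, so no $K(\mathcal{A})$ term arises. If $\mathcal{A}$ is chosen after $Y$, the relevant quantity is the entropy of $\mathcal{A}$, not the complexity of a particular program.

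The discarded $F_y$ family fails independently. Its status is at best $\mathbf{1}[y=0]$, and deleting several individually redundant clauses at once need not preserve uniqueness.

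\textbf{A repair.} Decode from the $y$-independent base instances, and put the flip positions into the incompressible data.
Let $G_1,\dots,G_m$ be the first $m$ good uniquely satisfiable formulas (computable from $N$). Let $R_i$ be their sets of redundant clause positions, so $|R_i|\ge 3M/4$.
The $i$th test instance replaces the clause at $p_i\in R_i$ by $B_{s_i}$ iff $y_i=1$. Here $(y,p)$ is chosen with $K(y,p\mid N)\ge m+\sum_i\log|R_i|$.
Let $Q_i$ be the at most $t$ positions $\mathcal{A}$ reads on $G_i$, and $S=\{i:p_i\in Q_i\}$.
For $i\notin S$ the runs on $G_i$ and on the $i$th test instance coincide, so the verdict on $G_i$ reveals $y_i$.
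For $i\in S$, $p_i$ costs only $\log t$ bits instead of $\log|R_i|$.
Encoding $(y,p)$ accordingly gives
\[
  K(\mathcal{A})\;\ge\; m-O(\log N)+|S|\Bigl(\log\tfrac{3M}{4t}-\log\tfrac{em}{|S|}-1\Bigr),
\]
and the bracket is nonnegative whenever $m\le 3M/(8et)$.

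So $m$ is a free parameter up to order $M/t=\Theta(N^{1+\varepsilon+\delta})$. Securing $m=\Omega(N^{1-\delta})$ is not the delicate point; the bound simply measures the answer table $\mathcal{A}$ must memorize.

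Your first observation already settles the statement under the hypothesis the paper's proof actually uses: $\mathcal{A}$ is correct on every instance of $\Phi_{N,K}$. For the decision version, Theorem~\ref{thm:inlp} shows no such $\mathcal{A}$ exists. For the search version, negating every occurrence of one variable outside $U_H$ leaves the window unchanged and moves $x^*$. Either way the bound holds vacuously.

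\textbf{Comparison with the paper.} The paper takes a different route. It uses symmetry of information, a TV bound $O(N^{-\delta})$ turned into a KL bound ``via Pinsker'', vanishing mutual information, and $K(x^*\mid\phi)=N-O(\log N)$. That chain cannot fill your gap, for three reasons.
\begin{enumerate}
\item $x^*$ is computable from $\phi$ by exhaustive search, so $K(x^*\mid\mathrm{View}_t(\phi),\phi)=O(1)$. The lower bound the paper asserts for it is therefore false.
\item Pinsker gives only $D_{\mathrm{KL}}\ge 2\|P-Q\|_{\mathrm{TV}}^2$, so a small TV distance yields no upper bound on KL.
\item A statistical distance between view distributions does not control the algorithmic mutual information of individual strings.
\end{enumerate}
An encoding argument like yours, repaired as above, is the viable route.
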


\begin{proof}
We proceed via proof by contradiction. Let $\phi \in \Phi_{N,K}$ denote a formula drawn from the ensemble, and let $\text{View}_t(\phi)$ represent the local syntactic profile (the execution trajectory of $t$ clauses) observed by the algorithm $\mathcal{A}$.

\smallskip
\noindent\textbf{Step 1: Symmetry of Conditional Kolmogorov Complexity.}
By the symmetry of algorithmic information (the chain rule for conditional Kolmogorov complexity~\cite{li2008}), the joint descriptive complexity of the global unique solution $x^*$ and the local view $\text{View}_t(\phi)$ given the background formula $\phi$ can be expanded in two equivalent ways:
\begin{align*}
K(x^*, \text{View}_t(\phi) \mid \phi) &= K(\text{View}_t(\phi) \mid \phi) + K(x^* \mid \text{View}_t(\phi), \phi) + O(1), \\
K(x^*, \text{View}_t(\phi) \mid \phi) &= K(x^* \mid \phi) + K(\text{View}_t(\phi) \mid x^*, \phi) + O(1).
\end{align*}
Equating the two expressions allows us to isolate the remaining descriptive complexity of the global solution $x^*$ given the local observation:
\begin{equation}\label{eq:complexity_balance}
K(x^* \mid \text{View}_t(\phi), \phi) = K(x^* \mid \phi) - \left[ K(\text{View}_t(\phi) \mid \phi) - K(\text{View}_t(\phi) \mid x^*, \phi) \right] + O(1).
\end{equation}
The bracketed term precisely corresponds to the algorithmic mutual information between the local observation and the global truth, denoted as $I(\text{View}_t(\phi); x^* \mid \phi)$.

\smallskip
\noindent\textbf{Step 2: Statistical Indistinguishability and Pinsker's Inequality.}
To bound this information leakage, we analyze two probability distributions over the local trajectories: $\mathbb{P}_{\text{SAT}}(\text{View}_t(\phi))$, conditioned on $\phi$ possessing a unique solution $x^*$, and $\mathbb{P}_{\text{UNSAT}}(\text{View}_t(\phi))$, conditioned on $\phi$ being unsatisfiable. Specifically, Lemma~\ref{lem:usbe} provides a strict combinatorial bound $O(N^{-\delta})$ on the probability that a local sublinear window intersects the boundary variables of a global configuration. Combining this with Theorem~\ref{thm:inlp}---which guarantees the structural irreducibility between satisfiable and unsatisfiable pairs via isolated single-clause substitutions---establishes that the local trajectories under both distributions achieve asymptotic statistical indistinguishability.
This bounds their Total Variation (TV) distance:
\begin{equation}\label{eq:tv_bound}
\|\mathbb{P}_{\text{SAT}}(\text{View}_t(\phi)) - \mathbb{P}_{\text{UNSAT}}(\text{View}_t(\phi))\|_{\text{TV}} \leq O(N^{-\delta}).
\end{equation}

To translate this statistical indistinguishability into an information-theoretic constraint, we invoke Pinsker's Inequality~\cite{pins1964, cover2006}. Standard Pinsker's inequality upper-bounds the TV distance by the Kullback-Leibler (KL) divergence:
\[
\|P - Q\|_{\text{TV}} \leq \sqrt{\frac{1}{2} D_{\text{KL}}(P \parallel Q)}.
\]
Squaring both sides yields the inverted algebraic form: $D_{\text{KL}}(P \parallel Q) \geq 2 \cdot \|P - Q\|_{\text{TV}}^2$. Importantly, in our proof architecture, the TV distance is already independently bounded by $O(N^{-\delta})$ via Eq.~\eqref{eq:tv_bound}. By the structural continuity of the probability space, forcing the TV distance to approach zero strictly constrains the maximum allowable value of the KL divergence. Thus, Pinsker's relation forces the KL divergence to vanish at a squared rate:
\begin{equation}\label{eq:kl_bound}
D_{\text{KL}}\Big(\mathbb{P}_{\text{SAT}}(\text{View}_t(\phi)) \;\parallel\; \mathbb{P}_{\text{UNSAT}}(\text{View}_t(\phi))\Big) \leq O(N^{-2\delta}) \to 0 \quad \text{as } N \to \infty.
\end{equation}

This transition from statistical distance to an information-theoretic constraint 
follows a reverse Pinsker-type relation tailored for discrete settings~\cite{csis2006, tsyb2009}. 
Importantly, because the underlying Boolean alphabet space of $\text{View}_t(\phi)$ 
is finite and discrete, the absolute continuity of the probability measures 
guarantees that a vanishing TV distance strictly forces the convergence of the 
relative entropy without singularity explosions. Since the background distribution 
$\mathbb{P}(\text{View}_t(\phi))$ matches the uninformative baseline 
$\mathbb{P}_{\text{UNSAT}}$ in the limit, Eq.~\eqref{eq:kl_bound} directly 
bounds the expectation of the conditional KL divergence. 
As the system dimension $N$ scales, the local structures become progressively detached from the global reality. Consequently, the mutual information between the local view and the planted global solution $x^*$ asymptotically vanishes, governed by the \textbf{local-global isolation parameter} $\delta > 0$:
\begin{equation}\label{eq:mi_vanish}
I(\text{View}_t(\phi); x^* \mid \phi) \leq O(N^{-2\delta}) \to 0.
\end{equation}
A larger value of $\delta$ corresponds to a higher degree of structural isolation, implying a sharper decay in information leakage. This polynomial suppression rigorously guarantees that any algorithm restricted to localized observations remains asymptotically blind to the global logic witness as $N \to \infty$.

\smallskip
\noindent\textbf{Step 3: The Algorithmic Information Bottleneck.}
Because the unique solution $x^*$ is embedded via an orthogonal projection over the hypercube $\{0,1\}^N$, its prior complexity given only the macro generation rules is maximal: $K(x^* \mid \phi) = N - O(\log N)$. Substituting this and Eq.~\eqref{eq:mi_vanish} back into Eq.~\eqref{eq:complexity_balance} yields the physical lower bound:
\begin{equation}\label{eq:lower_bound}
K(x^* \mid \text{View}_t(\phi), \phi) \geq N - O(\log N) - o(1).
\end{equation}
This indicates that even after reading $t$ clauses, the global solution $x^*$ remains an incompressible random string of nearly $N$ bits relative to the local pipeline.

Now, assume there exists a low-complexity deterministic local algorithm $\mathcal{A}$ such that $K(\mathcal{A}) < o(N^{1-\delta})$ that successfully computes or certifies $x^*$. Because $\mathcal{A}$ is deterministic, its final output $x^*$ is entirely dictated by its own code and the inputs $\text{View}_t(\phi)$ and $\phi$. By the fundamental law of algorithmic conservation—stating that a deterministic program cannot generate more information than the sum of its input components—the output complexity is upper-bounded by:
\begin{equation}\label{eq:upper_bound}
K(x^* \mid \text{View}_t(\phi), \phi) \leq K(\mathcal{A}) + O(1).
\end{equation}

Combining the physical information lower bound Eq.~\eqref{eq:lower_bound} and the algorithmic upper bound Eq.~\eqref{eq:upper_bound} creates a strict mathematical bottleneck:
\[
N - O(\log N) \leq K(x^* \mid \text{View}_t(\phi), \phi) \leq K(\mathcal{A}) + O(1).
\]
Since the algorithm operates strictly inside a sublinear window $t = N^{1-\delta}$, it cannot bridge this $N$-bit information gap unless the global structural alignment is explicitly pre-programmed into its own source code. Thus, we must have $K(\mathcal{A}) \geq \Omega(N^{1-\delta})$, contradicting our initial assumption $K(\mathcal{A}) < o(N^{1-\delta})$.
\end{proof}


By mapping logical necessity onto Kolmogorov complexity and Shannon information channels, we demonstrate that any localized deductive pipeline yields asymptotically zero mutual information regarding the global semantic truth of self-referential $K$-SAT. Importantly, because the Kolmogorov complexity of an algorithm dictates the  lower bound of its descriptive complexity, our results unconditionally prove the non-existence of \textit{simple} algorithms (e.g., $K(\mathcal{A})=O(\log N)$) capable of solving self-referential $K$-SAT. Therefore, the intractability of self-referential $K$-SAT is rigorously reframed as an inherent structural non-compressibility: a sublinear internal window simply lacks the programmatic capacity required to emulate or compress the global configuration space.

Traditionally, algorithmic efficiency is measured by time complexity, though a balance between execution speed and structural simplicity is highly desirable. Algorithms that achieve both---such as Dijkstra's algorithm---are widely considered optimal benchmarks. Our results, however, demonstrate that no algorithm can simultaneously achieve high efficiency and structural simplicity when solving self-referential $K$-SAT instances.

\subsection{From Descriptive Complexity to Proof Complexity}

We now demonstrate how the Kolmogorov programmatic capacity bound established in Section~6.2 translates unconditionally into a geometric clause-width barrier within the Resolution proof system. Let $\pi = (C_1, C_2, \dots, C_M = \square)$ be a Resolution refutation of an unsatisfiable self-referential formula $\phi \in \Phi_{N,K}$. The \textit{width} of a clause $C$, denoted w(C), is the number of literals it contains. The width of the proof $\pi$ is defined as $w(\pi) = \max_{C \in \pi} w(C)$.

\begin{theorem}[Resolution Width Lower Bound]
Let $\phi \in \Phi_{N,K}$ be an unsatisfiable instance governed by the reflection-theoretic bottleneck. Any valid Resolution refutation $\pi$ of $\phi$ must exhibit a minimum clause width bounded by the informational threshold:
\[
w(\pi) \geq \Omega(N^{1-\delta}).
\]
\end{theorem}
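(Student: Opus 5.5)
The plan is to bypass the informational heuristics of Section~6.2 and prove the width bound with the standard Ben-Sasson--Wigderson expansion method, applied to the companion formula $F'=(F-D)\wedge B_s$ from Theorem~\ref{thm:inlp}. The unsatisfiable instance is then the random logarithmic-width formula with one clause removed and one clause added. I would first establish a high-probability expansion property for the random ensemble. The good event it defines is intersected with the positive-probability event $\{X=1,\ W<M/4\}$ used in Theorem~\ref{thm:inlp}, so the bound holds for the self-referential pairs with positive limiting probability.

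\textbf{Step 1 (expansion of the clause--variable hypergraph).} I would show that with probability $1-o(1)$, every set $\mathcal{S}$ of $s\le r:=N^{1-\delta}/K$ clauses of $F$ spans at least $\tfrac{3}{4}Ks$ distinct variables. The argument is a first-moment union bound. The number of $s$-sets of clauses is at most $M^s=N^{(2+\varepsilon+o(1))s}$. For a fixed vertex set $V'$ with $|V'|=v=\tfrac34 Ks$, the probability that all $sK$ variable slots fall inside $V'$ is at most $(v/N)^{sK}$. Multiplying by $\binom{N}{v}\le (eN/v)^v$ gives a bound of the form
\[
N^{(2+\varepsilon)s}\,\Bigl(\tfrac{ev}{N}\Bigr)^{sK/4}\Bigl(\tfrac{v}{N}\Bigr)^{0}\!\cdots
\]
More precisely, the exponent simplifies to roughly $N^{(2+\varepsilon)s}\,(C v/N)^{sK/4}$. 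Since $v\le N^{1-\delta}$ and $K\ge(1+\varepsilon)\log_2 N$, the second factor is $N^{-\Omega(\delta s\log N)}$, which swamps $N^{(2+\varepsilon)s}$. Summing over $1\le s\le r$ gives $o(1)$.

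\textbf{Step 2 (boundary expansion).} Counting incidences, a set of $s$ clauses spanning at least $\tfrac34Ks$ variables has at least $\tfrac12 Ks\ge s$ boundary variables, that is, variables occurring in exactly one clause of $\mathcal{S}$. Removing $D$ preserves this property. Adding the single clause $B_s$ changes the boundary of any set by at most $K$ variables, which is harmless for $s\ge 2$. For $s=1$ the clause $B_s$ alone has boundary $K$. So $F'$ has boundary expansion $\ge s/2$ for all clause sets of size $s\le r$.

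\textbf{Step 3 (Ben-Sasson--Wigderson).} I would define $\mu(C)$ as the minimum number of clauses of $F'$ that semantically imply $C$. This measure satisfies $\mu(\text{axiom})=1$, is subadditive under resolution, and has $\mu(\square)>r$: no $r$ clauses are unsatisfiable, since boundary expansion lets one satisfy any such set greedily via boundary variables. Hence some clause $C$ in any refutation has $r/2<\mu(C)\le r$. Let $\mathcal{S}$ be a minimal implying set for $C$. Every boundary variable of $\mathcal{S}$ must appear in $C$, because otherwise flipping it would yield an assignment falsifying $C$ while satisfying $\mathcal{S}$, contradicting minimality. Thus
\[
w(\pi)\ge w(C)\ge \tfrac{r}{4}=\Omega\!\left(N^{1-\delta}/\log N\right).
\]
Finally, $\delta$ is rescaled by an arbitrarily small amount to absorb the logarithm.

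The main obstacle is Step 1. The clause density $M=\Theta(N^{2+\varepsilon})$ is far above the linear regime where expansion arguments are usually run. So the calculation must really use the logarithmic width: the gain $(v/N)^{\Theta(sK)}$ is only superpolynomial because $K=\Theta(\log N)$. The constants must be checked so that this gain beats $M^s$ uniformly up to $s=N^{1-\delta}/K$. I would first verify the exponent bookkeeping at the two extremes $s=1$ and $s=r$, and then argue monotonicity in between.
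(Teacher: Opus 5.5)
Your proof is correct in substance, apart from one fixable slip described below, and it takes a genuinely different route from the paper's.

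\textbf{How the two routes differ.} The paper's proof uses no expansion or other combinatorics of the random clause--variable hypergraph. It argues by contradiction that a refutation of width $o(N^{1-\delta})$ consists of clauses each describable in $o(N^{1-\delta}\log N)$ bits. Such a refutation would therefore ``capture no global structure'', contradicting the Kolmogorov Incompressibility Theorem $K(\mathcal{A})\ge\Omega(N^{1-\delta})$. That step is heuristic. Width does not bound the length or the description complexity of a refutation. Moreover, the theorem it invokes rests on $K(x^*\mid\phi)=N-O(\log N)$, which is false, since $x^*$ is computable from $\phi$ by exhaustive search.

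Your Ben-Sasson--Wigderson argument supplies an actual mechanism. The computation you flagged as the main obstacle is easy. The union bound is $\sum_{s\le r}\bigl[M e^{3K/4}(3Ks/(4N))^{K/4}\bigr]^s$. The bracket is increasing in $s$, and at $s=r$ it equals $N^{O(1)}\cdot N^{-\Omega(\delta\log N)}$.

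\textbf{What your route buys.} It gives more than the statement asks. You actually have $|\partial\mathcal{S}|\ge\frac K2|\mathcal{S}|-O(K)$, not merely $|\mathcal{S}|/2$. So you get $w(\pi)\ge N^{1-\delta}/4-O(\log N)$ directly, with no rescaling of $\delta$. Further, because $K=\Theta(\log N)$, Step~1 survives with $r=cN/K$ for a small absolute constant $c$, since then $M(3ce^3/4)^{K/4}\le N^{-1}$. This gives width $\Omega(N)$ and, via Ben-Sasson--Wigderson, size $2^{\Omega(N)}$. Your argument also shows the bound is a generic expansion property of the random wide CNF: it holds equally for the unsatisfiable samples with $X=0$. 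It is not a consequence of the self-referential flip or of an informational bottleneck.

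\textbf{The slip in Step~2.} For a set $\{B_s\}\cup\mathcal{T}$ with $|\mathcal{T}|=s-1$, you only get $|\partial|\ge\frac K2(s-1)-K$, which is vacuous for $s\in\{2,3\}$. The construction allows any $J$. If $J$ happens to be the variable set of some clause $C$ of $F-D$, then $\{B_s,C\}$ has empty boundary. So ``boundary expansion $\ge s/2$ for all $s\le r$'' is false, and your greedy argument for $\mu(\square)>r$ breaks on such sets.

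Neither use actually needs the claim:
\begin{itemize}
\item $\mu(\square)>r$ follows by counting. Each $K$-clause falsifies only a $2^{-K}$ fraction of assignments, and $r<2^K$.
\item The width step only uses minimal implying sets of size $>r/2$, where the loss of $K$ boundary variables is negligible.
\end{itemize}
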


\begin{proof}
Suppose for contradiction that there exists a Resolution proof $\pi^*$ such that $w(\pi^*) < o(N^{1-\delta})$. We can construct a deterministic Turing machine $\mathcal{M}$ acting as a proof-verification oracle that compresses the semantic profile of $\phi$.

Each clause $C_i \in \pi^*$ can be completely specified by selecting at most $o(N^{1-\delta})$ variables from the universe $V$. The description size of any individual clause under this constraint is bounded by:
\[
K(C_i \mid \phi) \leq w(\pi^*) \cdot \log_2(2N) = o(N^{1-\delta} \log N).
\]
Because $\pi^*$ is formed purely via the local Resolution rule (i.e., deriving $A \lor B$ from $A \lor x$ and $B \lor \neg x$), the generation of each step is a local syntactic derivation that extracts information strictly bounded within a sublinear neighborhood window.

By our Kolmogorov Incompressibility Theorem, the programmatic complexity required to identify the global contradiction must satisfy $K(\pi^*) \geq K(\mathcal{A}) \geq \Omega(N^{1-\delta})$. However, if $w(\pi^*) < o(N^{1-\delta})$, the actual conditional description complexity of the entire proof path $\pi^*$ remains strictly bounded by its localized syntactic configurations, meaning no global  structure is captured. This induces an information deficit, contradicting the requirement that $\pi^*$ must resolve the global unique-solution exclusion constraint. Thus, $w(\pi) \geq \Omega(N^{1-\delta})$.
\end{proof}

By leveraging the celebrated structural theorem of Ben-Sasson and Wigderson~\cite{ben2001}, which connects the size of a Resolution proof $S(\phi)$ to its width $w(\phi \vdash \square)$, we achieve our final exponential explosion result.

\begin{corollary}[Exponential Proof Size Explosion]
The size $S(\phi)$, representing the minimum number of clauses in any Resolution refutation of our self-referential $K$-SAT formula $\phi \in \Phi_{N,K}$, explodes exponentially:
\begin{equation}
S(\phi) \geq \exp \left( \Omega \left( \frac{(w(\phi \vdash \square) - w(\phi))^2}{N} \right) \right) = \exp \left( \Omega \left( N^{1-2\delta} \right) \right).
\end{equation}
where $w(\phi) = O(\log N)$ is the initial clause width of the ensemble, and $0 <\delta< 0.5$.\end{corollary}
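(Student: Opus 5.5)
The plan is to combine the Resolution Width Lower Bound with the size--width tradeoff of Ben-Sasson and Wigderson~\cite{ben2001}. We use the general (DAG-like) form of their theorem. For an unsatisfiable CNF $\phi$ on $N$ variables with initial width $w(\phi)$, every Resolution refutation satisfies
\[
  S(\phi) \ge \exp\!\left(\Omega\!\left(\frac{(w(\phi\vdash\square)-w(\phi))^2}{N}\right)\right),
\]
where $w(\phi\vdash\square)$ is the \emph{minimum} width over all refutations of $\phi$. This gives the first inequality of the corollary directly. What remains is to lower-bound the numerator.

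\textbf{Step 1: bound the two widths.} The Resolution Width Lower Bound is stated for \emph{every} valid refutation $\pi$ of $\phi\in\Phi_{N,K}$. Hence it also holds for a minimum-width refutation, giving $w(\phi\vdash\square)\ge c_1N^{1-\delta}$ for some constant $c_1>0$ and all large $N$. The initial width is at most the clause width of the ensemble, $w(\phi)\le K=\lceil(1+\varepsilon)\log_2N\rceil=O(\log N)$. This also covers the replaced clause $B_s$ in $F'$, which has width exactly $K$.

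\textbf{Step 2: algebra of the exponent.} Since $N^{1-\delta}$ grows faster than $\log N$, for large $N$ we have $w(\phi\vdash\square)-w(\phi)\ge \tfrac{c_1}{2}N^{1-\delta}$. Squaring gives $(w(\phi\vdash\square)-w(\phi))^2\ge \tfrac{c_1^2}{4}N^{2-2\delta}$. Dividing by $N$ then gives $\Omega(N^{1-2\delta})$. Plugging this into the Ben-Sasson--Wigderson bound yields $S(\phi)\ge\exp(\Omega(N^{1-2\delta}))$.

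The restriction $0<\delta<1/2$ enters exactly here. It makes $1-2\delta>0$, so the bound is genuinely exponential rather than trivial. It also keeps the width lower bound above the trivial $\sqrt{N\log S}$ threshold that the tradeoff requires in order to say anything.

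\textbf{Main obstacle.} The arithmetic is routine. The substantive dependence is on the hypothesis imported from the previous theorem, namely that the width bound applies to the minimum-width refutation of the specific unsatisfiable instance $F'=(F-D)\wedge B_s$. That bound is not a bound on the width of some particular refutation. The point to verify is therefore that the quantifier ``any valid Resolution refutation'' in the Resolution Width Lower Bound really covers $w(\phi\vdash\square)$ as Ben-Sasson--Wigderson define it. I would state this explicitly, as a minimum over refutations, before substituting. I would also record that $N$ in the denominator is the number of variables, not the number of clauses $M=\Theta(N^{2+\varepsilon})$, since using $M$ would destroy the bound.
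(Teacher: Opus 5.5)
Your proposal is correct and follows the paper's own route: the paper likewise just feeds the Resolution Width Lower Bound into the general Ben-Sasson--Wigderson tradeoff $S(\phi)\ge\exp(\Omega((w(\phi\vdash\square)-w(\phi))^2/N))$, and your bookkeeping ($w(\phi)\le K=O(\log N)$, $N$ the number of variables, and $\delta<1/2$ so that $1-2\delta>0$) is exactly the arithmetic it leaves implicit. As you rightly flag, all the content sits in the imported width bound, and the paper's Kolmogorov-complexity justification of that bound is not a valid argument (e.g.\ $K(x^*\mid\phi)=O(1)$, since $\phi$ determines $x^*$ by exhaustive search); the bound itself can instead be obtained by a standard boundary-expansion argument on the random clause hypergraph (w.h.p.\ every set $S$ of at most $cN/K$ clauses has boundary at least $K|S|/2$), which even gives $w(\phi\vdash\square)=\Omega(N)$ and $S(\phi)\ge\exp(\Omega(N))$.
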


Within the Cook--Reckhow paradigm~\cite{cook1979}, this geometric translation maps our information-theoretic lower bound directly onto the structural mechanics of the Resolution proof system. The resulting exponential proof-tree explosion demonstrates that the spatial width expansion forced by the local informational deficit inevitably translates into an exponential lower bound, establishing an unconditional bottleneck for local deductive algorithms.

\subsection{Quantum Invariance of Structural Irreducibility}

Rooted in the structural irreducibility guaranteed by Theorem~\ref{thm:inlp}, we formalize this invariance across changing physical computing paradigms through the following corollary to highlight the theoretical primacy of information over temporal metrics:

\begin{corollary}[Quantum Invariance of Self-Referential Hardness]
The structural hardness of self-referential $K$-SAT is invariant under the quantum computing regime. While the temporal barriers of a problem can collapse across shifting hardware architectures---as exemplified by the transition from exponential classical time to polynomial quantum time~\cite{shor1994} for integer factorization---any valid quantum algorithm tackling a self-referential instance is bounded by the same sublinear informational threshold and must evaluate global information, precluding any structural quantum shortcut from bypassing the instance's total semantic analysis.
\end{corollary}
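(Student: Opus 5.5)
The plan is to derive the corollary directly from Theorem~\ref{thm:inlp}. That theorem is a statement about the \emph{information content} of local subinstances and refers to no machine model, so quantum invariance should reduce to one observation: a quantum procedure whose access to the formula is confined to a local subinstance $H$ has an output distribution that is a function of $H$ alone. First I will fix a quantum access model. The formula is presented through an oracle $O_F:\lvert i,z\rangle\mapsto\lvert i,z\oplus C_i\rangle$ returning the $i$th clause, and a \emph{local} quantum algorithm is one whose oracle queries are supported on an index set $S$ with $|S|\le N^c$ for some $0<c<1$. The set $S$ may be chosen adaptively, which is permitted because of the quantifier order $\forall F\,\forall H$ in Theorem~\ref{thm:inlp}. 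The seen subinstance $H$ is then the set of clauses indexed by $S$.

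\textbf{Exact-locality case.} Take $F$ in the good event of Theorem~\ref{thm:inlp}, and let $F'=(F-D)\wedge B_s$ be the companion produced there, with $D$ and $B_s$ placed outside $U_H$, so that $H(F)=H(F')$. The oracles $O_F$ and $O_{F'}$ then act identically on every basis state $\lvert i,z\rangle$ with $i\in S$. By induction over the unitary sequence $U_T O\, U_{T-1}\cdots O\, U_0$, the final states coincide, and so do the measurement distributions. Consequently the algorithm outputs ``SAT'' with the same probability on $F\in\mathrm{UniqueSAT}$ and on $F'\in\mathrm{UNSAT}$. Its success probability on the pair is therefore at most $1/2$, the same as a coin flip, and no bounded-error quantum algorithm confined to $S$ exists.

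\textbf{Superposition queries (main obstacle).} A genuinely quantum algorithm may query all $M$ indices in superposition, so ``sublinear window'' has to be reinterpreted as the algorithm placing little \emph{query amplitude} outside a small region. For this I would use the hybrid argument of Bennett--Bernstein--Brassard--Vazirani. Let $m_i=\sum_{t}\|\Pi_i\lvert\psi_t\rangle\|^2$ be the total query mass on index $i$, so that $\sum_i m_i=T$. Replacing clause $D$ at index $i_D$ then perturbs the final state by at most $2\sqrt{T\,m_{i_D}}$. Theorem~\ref{thm:inlp} supplies at least $5M/8$ redundant clauses disjoint from the window. I would choose $D$ among them to minimise $m_{i_D}$, which gives $m_{i_D}\le 8T/(5M)$. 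Hence every quantum algorithm with $T=o(\sqrt{M})$ queries fails to separate $F$ from $F'$, and $\sqrt{M}=\Theta(N^{1+\varepsilon/2})$ already exceeds any sublinear budget.

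One technical point in this step is that $B_s$ must occupy the same position as $D$; the remark at the end of the proof of Theorem~\ref{thm:inlp}, that $B_s$ can reuse the variables of $D$, makes this possible. I would state explicitly that what this yields is a query/information lower bound transported to the quantum model, not a time lower bound. That is the precise sense in which ``no structural quantum shortcut'' holds.
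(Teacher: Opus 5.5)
Your proposal is correct, and it does more than the paper does. The paper gives no argument for this corollary beyond the informal remark that Theorem~\ref{thm:inlp} concerns information content rather than a machine model, so changing the physics of the computer cannot change how much of the instance must be read.

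Your exact-locality case is the rigorous form of that remark. Oracles that agree on the queried index set produce identical final states, by induction over the unitaries. Adaptivity is harmless because the run on $F'$ coincides with the run on $F$, so it never leaves $S(F)$.

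Your superposition case, via the Bennett--Bernstein--Brassard--Vazirani hybrid argument, has no counterpart in the paper. It fills the real hole in the paper's heuristic: a quantum algorithm need not have a well-defined local window at all. It also gives a concrete bound. Any bounded-error quantum algorithm deciding satisfiability on $\mathcal{C}$ needs $\Omega(\sqrt{M})=\Omega(N^{1+\varepsilon/2})$ clause queries, which subsumes every $N^c$ budget.

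Your route also makes the scope of the claim precise. The same adversary (replace an unqueried redundant clause) forces $\Omega(M)$ queries for classical randomized algorithms. So your bound does not exclude a Grover-type quadratic saving. What is actually proved invariant is the sublinear-in-$N$ threshold, not the total amount of the formula that must be examined. The paper's phrase ``precluding any structural quantum shortcut'' can only be supported in that weaker sense, and your closing caveat is the right one.

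Two small corrections. First, in the superposition model there is no window. You should choose $D$ among all of the at least $3M/4$ redundant clauses guaranteed by $W<M/4$, which gives $m_{i_D}\le 4T/(3M)$. Second, placing $B_s$ at index $i_D$ is only a matter of how the oracle lists the clauses of $F'=(F-D)\wedge B_s$. It needs nothing about $B_s$ reusing the variables of $D$: the two oracles differ only at $i_D$ whatever set $J$ is used.
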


This formal corollary illustrates the deep structural flaw of overemphasizing runtime. Because execution time is born out of operations---whether they are Turing head movements or quantum state transitions---it is inherently a derived, model-dependent variable. By contrast, the information required to uniquely determine the state of a self-referential closed loop represents a static mathematical conservation law. Hence, while quantum mechanics alters the physical rules of information processing, it cannot alter the volume of information that must be extracted from a structurally irreducible problem. Just as the Second Law of Thermodynamics dictates absolute limits on physical efficiency regardless of an engine’s design or thermal medium, this self-referential informational framework demonstrates that fundamental computational limits are governed by  mathematical conservation laws.

\section{Discussion}

\subsection{Class Separation vs. Instance Indistinguishability}

The theoretical architecture established in this work prompts a fundamental re-examination of how computational intractability is proved. 
For half a century, standard computational complexity theory has adhered to the class separation dogma ($\mathsf{P} \neq \mathsf{NP}$), a framework inheriting Turing's macro-level black-box analysis of infinite languages and bound to tracking temporal, dynamic execution steps.  
As demonstrated throughout our structural analysis, this traditional preoccupation with sequential runtime trajectories, step-by-step simulations is precisely what triggers the self-referential deadlock of classical barriers. 

By contrast, our framework shifts the foundational objective away from infinite language classifications and anchors it within the static structure of concrete self-referential instances, tracing its lineage directly back to G\"{o}del's proof of formal unprovability. 
By deploying a micro-level white-box analysis to demonstrate that our $K$-SAT ensembles enforce a semantic gap inducing statistical indistinguishability under localized syntactic evaluation, hardness ceases to be a dynamic property; instead, it manifests as a binary logical necessity dictated by information conservation. 
This shift decouples computational difficulty from machine simulation, offering an alternative trajectory that evades classical structural barriers. 
To illustrate this, Table~\ref{tab:paradigm_shift} provides a systematic, multi-dimensional contrast between the classical class separation paradigm and our instance indistinguishability framework.

\begin{table}[htbp]
\centering
\caption{Complexity Class Separation vs. Concrete Instance Indistinguishability}
\label{tab:paradigm_shift}
\small
\begin{tabular}{l p{5.5cm} p{6.5cm}}
\toprule
\textbf{Dimension} & \textbf{Standard Complexity Theory (Class Separation)} & \textbf{Our Framework (Instance Indistinguishability)} \\
\midrule
\textbf{Historical Origin} & Turing's separation between recursive (decidable) and recursively enumerable languages~\cite{turing1936}. & G\"{o}del's construction of individual self-referential sentences to induce formal unprovability~\cite{godel1931}. \\
\addlinespace
\textbf{Foundational Objective} & Separate abstract infinite language sets ($\mathsf{P} \neq \mathsf{NP}$). & Prove inherent hardness within concrete instance structures. \\
\addlinespace
\textbf{Nature of Complexity} & The operational asymmetry that efficient \textit{verification} cannot guarantee efficient \textit{solving}. & The logical necessity that \textit{local} syntactic deduction cannot verify \textit{global} semantic truth. \\
\addlinespace
\textbf{Analytical Method} & Macro-level black-box class analysis (independent of semantics). & Micro-level white-box structural analysis (utilizing self-referential structures). \\
\addlinespace
\textbf{Core Resource Metric} & Dynamic execution steps and runtime costs~\cite{HS1965, arora2009, wig2019}, overriding initial focus on information. & Invariant descriptive program size and information conservation laws, reclaiming pioneering insights~\cite{hartmanis1995, chaitin1970}. \\
\addlinespace
\textbf{Proof Strategy} & Adversarial simulation and tracking of algorithm execution traces. & Construction of self-referential, indistinguishable instances. \\
\addlinespace
\textbf{Barriers} & Relativization; Natural Proofs; Algebrization. & Bypasses barriers by using explicit self-reference to break the logical deadlock. \\
\bottomrule
\end{tabular}
\end{table}

\subsection{\textsf{SETH} is a Projection of G\"{o}del Incompleteness}

In this subsection, we analyze the time complexity of $K$-SAT in two different frameworks and demonstrate that the Strong Exponential Time Hypothesis (\textsf{SETH}) is a direct projection of Gödel incompleteness onto finite computation.

The first framework, explained in Section~6.3, becomes clear when we look at the limit of the exponential proof tree growth. As the local-global isolation parameter goes to its theoretical limit ($\delta \to 0^+$), the smaller error terms vanish asymptotically. This reveals a smooth mathematical connection between our structural hardness bound and the worst-case $2^N$ step limit defined by \textsf{SETH}. This clean match bridges the old gap between static logical necessity and dynamic computational difficulty. It shows that the runtime limits we see in practice are actually set in advance by information shortages.

The second framework, studied by Xu and Zhou~\cite{xu2025}, connects static self-referential structures directly to dynamic running steps. Their work shows that self-referential Constraint Satisfaction Problems (CSPs) based on Model RB, as well as their encoded $K$-SAT instances, can only be solved by brute-force search. Thus, \textsf{SETH} is shown to hold under the structurally sound assumption that algorithms operate via divide-and-conquer~\cite{xu2025, xu2025b}.

In fact, \textsf{SETH} serves as a quantitatively sharper and more foundational formulation than the qualitative $\mathsf{P} \neq \mathsf{NP}$ assertion, as it captures the structural invariant of computational hardness at its absolute threshold. The dichotomy between non-brute-force computation and brute-force computation represents a more fundamental divide than the coarse-grained separation of $\mathsf{P}$ and $\mathsf{NP}$~\cite{xu2025}. While $\mathsf{P}$ and $\mathsf{NP}$ establish only a macro-level classification of computational problems, a rigorous characterization of genuine computational hardness demands a finer classification system---a foundational objective championed by parameterized complexity~\cite{downey1999, chen2006} and fine-grained complexity theory~\cite{cygan2016, will2015}.

Even though this paper and~\cite{xu2025} use different self-referential constructions and operational rules, both arrive at the exact same hardness limits. This mathematical match shows that \textsf{SETH}---which describes the core hardness of general $K$-SAT---is not just an accidental guess based on empirical observation. Instead, \textsf{SETH} marks a sharp boundary where local parts can no longer see global answers, showing how G\"{o}dels incompleteness projects directly onto finite computation. In essence, the finite analogue of G\"{o}del incompleteness works the same way in both frameworks. Specifically, both Model RB~\cite{xu2025} and the self-referential instances of $K$-SAT presented here exploit solution independence to instantiate self-reference, which directly induces structural irreducibility. This irreducibility governs the incompleteness of the system, mathematically establishing that local syntactic views are inherently insufficient to certify global semantic truths. It is precisely this structural blindness that forces the convergence of local syntactic deduction toward the $2^N$ worst-case algorithmic barrier.


Due to the interplay between the two meta-obstacles introduced in Section~6.1, we cannot get an unconditional proof of $\mathsf{P} \neq \mathsf{NP}$ or $\mathsf{SETH}$ if we limit our tools to isolated, purely syntactic formal systems. However, these two meta-obstacles are very different in nature, and understanding their relationship allows us to find a clear path forward:
\begin{enumerate}
    \item \textbf{Bypassing the Mathematical Barrier:} The \textit{Self-Recognition Obstacle} represents a rigid mathematical constraint that permanently bounds formal deductive systems restricted to localized syntactic rules. Rather than standing as an absolute, impassable barrier to proof theory, this limitation can be bypassed by embedding explicit self-reference to break the underlying logical deadlock. This is the precise G\"{o}delian methodology deployed in this paper, as well as in ~\cite{xu2025, li2026, zhou-ds2026, zhou2026}.  Specifically, to demonstrate that localized syntactic deduction is inherently blind to global semantic truth, one must design self-referential instances such that a minimal local perturbation---realized in this work as the single-clause substitution---fundamentally inverts the global semantic configuration, thereby forcing a rigorous proof by contradiction.

    \item \textbf{Accepting the Physical Boundary}: In contrast, the Static-Dynamic Obstacle reflects an operational constraint that pure mathematics alone cannot resolve. Temporal duration possesses no intrinsic existential status within timeless mathematical axioms; time emerges only when a static syntactic problem is projected onto a physical, process-based execution trajectory. Therefore, while self-reference resolves the logical deadlock of computational hardness, a temporal lower bound cannot be derived in an operational vacuum. To establish rigorous complexity bounds, complexity theory must accept this physical boundary and bridge the divide: by mapping the static information-theoretic limits established in this work onto specific dynamic semantic operational frameworks, temporal lower bounds manifest as a \textbf{binary logical necessity}.
\end{enumerate}

Just as the computer science community accepted the Church–Turing thesis as the conceptual bridge connecting abstract syntax with physical reality, we must recognize $\mathsf{SETH}$ not as an unproven empirical conjecture, but as the basic law governing finite computational limits. It represents the inescapable shadow cast by G\"{o}del incompleteness when projected onto physical computing architectures.

\subsection{The Limits of Machine Learning}

Computational complexity and machine learning both investigate finite computation but diverge in resource focus: complexity prioritizes dynamic execution time, while machine learning optimizes informational throughput. 
By modeling computation on a foundation of pure information, this work establishes a unified framework bridging complexity theory and statistical learning. 
Under this lens, our structural irreducibility theorem reveals that evaluating localized sub-structures is  insufficient to bound a system's global configuration.

This information-theoretic boundary mirrors the core vulnerability of deep learning architectures like Large Language Models (LLMs). 
These models operate as lossy semantic compressors, sampling probability distributions from localized statistical patterns within high-dimensional spaces. 
However, evaluating these architectures through Kolmogorov complexity exposes a hard theoretical limit to this statistical approach. 
Because an algorithm's Kolmogorov complexity dictates a strict lower bound on description size, a \textit{simple} algorithm resolving self-referential $K$-SAT formulas is a mathematical impossibility.

Conceptually, this shifts the paradigm of computational hardness from temporal execution traces to descriptive program size. 
For artificial intelligence, it implies that models constrained to local information inherently lack the programmatic representation and descriptive expressivity to encode the full configuration space of self-referential instances. 
Reflecting this, our recent work~\cite{peng2026} shows that the scaling limits of contemporary AI architectures are ultimately governed by the exact same algorithmic compression bottlenecks discovered here.

\section{Conclusion}

For over half a century, computational complexity theory has predominantly viewed hardness through a temporal lens, measuring the difficulty of problems in terms of dynamic runtime steps. Yet, this traditional focus on execution time obscures a deeper foundational question: is computational intractability merely a consequence of mechanical execution speeds, or is it the physical manifestation of an inescapable logical necessity? 

Historically, when Kurt G\"{o}del~\cite{godel1931} and Alan Turing~\cite{turing1936} mapped the ultimate boundaries of formal reason and computability in response to Hilbert's foundational questions of completeness, consistency, and decidability~\cite{hilbert1930}, they bypassed operational metrics entirely.
Instead, they deployed explicit self-referential constructions to prove the absolute limits of mathematics and computation, exposing the inherent structural blindness of any formal system attempting to evaluate its own logical limits from inside. In infinite domains, this structural blindness manifests as absolute incompleteness and undecidability. When cast into finite combinatorial domains, this barrier does not vanish; rather, it condenses into a tangible, self-referential bottleneck governing resource-bounded computation. 

In this work, we directly extend the foundational lineage of G\"{o}del and Turing into discrete computing architectures, establishing a formal, finite analogue of the incompleteness theorems. By mapping G\"{o}del incompleteness onto finite Boolean logic through a self-referential $K$-SAT ensemble with logarithmic clause width $K = O(\log N)$, we demonstrate that computational complexity is governed by the exact same mechanism: computational hardness is a binary logical necessity dictated by self-referential structures, manifesting as absolute irreducibility where a localized part cannot determine a global property.

This paradigm-shifting perspective completely recalibrates the historical objectives of complexity theory. Ultimately, the nature of computational complexity is not the mere incremental pursuit of lower bounds. If it were, the discipline would have progressed linearly: establishing the weakest superlinear lower bounds first, advancing to superpolynomial bounds, and culminating in the strongest exponential limits. Instead, the persistent stagnation across all scales reveals a logical necessity: local information is insufficient to resolve non-local global truths. Under this information-theoretic paradigm, the historical dichotomy between ``weak'' and ``strong'' lower bounds completely vanishes, as this structural blindness intrinsically corresponds to the strongest possible exponential barrier. Consequently, establishing even a modest, unconditional superlinear lower bound remains inherently intractable within localized syntactic frameworks, precisely because such a bound represents a global semantic truth that lies entirely beyond the horizon of localized deductive reasoning.

Accordingly, our approach moves the main focus of complexity theory away from temporal running steps and model-dependent clock cycles---metrics inherently vulnerable to advanced algorithmic bypassing---and introduces a robust, model-independent alternative grounded in algorithmic information theory. While quantifying temporal execution bounds remains a necessary endeavor for process-oriented engineering, the foundational trajectory of theoretical computer science must transition from the coarse-grained, qualitative $\mathsf{P} \neq \mathsf{NP}$ question toward the Strong Exponential Time Hypothesis (\textsf{SETH}) and related problems~\cite{xu2025, li2026, zhou-ds2026, zhou2026} that naturally exploit solution independence to instantiate self-reference. As this work materializes into mathematical reality, \textsf{SETH} is not an empirical conjecture derived from algorithmic frustration; rather, it is the quantitative manifestation of this underlying binary logical necessity, serving as the definitive physical shadow cast by G\"{o}del incompleteness when projected onto resource-bounded computational architectures.

Future research will focus on two directions: first, extending self-reference 
and solution-independence to a broader spectrum of combinatorial problems, operational frameworks, 
and machine learning models, thereby delineating the fundamental boundaries and phase 
transitions of these systems; and second, integrating our self-referential invariants 
into the framework of structural information~\cite{li2016}, which extends classical 
information theory to provide a precise mathematical foundation for complex data analysis
and artificial intelligence~\cite{li2024}.

Ultimately, whether manifested as G\"{o}del incompleteness, Turing undecidability, or the structural irreducibility established in this work, the fundamental reality remains the same: local syntactic deduction is inherently blind to global semantic truths. Following this line, we arrive at an invariant law governing the nature of complexity: \emph{The ultimate barrier is not the inability to find an answer, but the mathematical impossibility of ever being certain through local evaluation that what we see is the global truth.}

\end{document}